\documentclass[11pt,letterpaper]{article}

\usepackage{authblk}

\usepackage[margin=1in]{geometry}

\usepackage{mathpazo}

\usepackage{enumerate}
\usepackage{enumitem}

\usepackage{microtype}

\usepackage{tikz}
\usetikzlibrary{quantikz2}

\usepackage{ytableau}

\usepackage{float}
\usepackage[font=small]{caption}

\usepackage[colorlinks = true]{hyperref}
\definecolor{darkred}  {rgb}{0.5,0,0}
\definecolor{darkblue} {rgb}{0,0,0.5}
\definecolor{darkgreen}{rgb}{0,0.5,0}
\hypersetup{
  urlcolor   = blue,         
  linkcolor  = darkblue,     
  citecolor  = darkgreen,    
  filecolor  = darkred       
}

\usepackage{amsmath,amssymb,amsfonts,amsthm,amstext}

\usepackage{bm}

\usepackage{etoolbox}

\usepackage{mathtools}
\mathtoolsset{centercolon}
\makeatletter
\protected\def\tikz@nonactivecolon{\ifmmode\mathrel{\mathop\ordinarycolon}\else:\fi}
\makeatother

\usepackage{tcolorbox}

\usepackage{algorithm}
\usepackage[beginComment=]{algpseudocodex}

\usepackage[nameinlink]{cleveref}
\crefname{lemma}{Lemma}{Lemmas}
\crefname{proposition}{Proposition}{Propositions}
\crefname{definition}{Definition}{Definitions}
\crefname{theorem}{Theorem}{Theorems}
\crefname{corollary}{Corollary}{Corollaries}
\crefname{claim}{Claim}{Claims}
\crefname{section}{Section}{Sections}
\crefname{appendix}{Appendix}{Appendices}
\crefname{figure}{Fig.}{Figs.}
\crefname{table}{Table}{Tables}
\crefname{algorithm}{Algorithm}{Algorithms}
\crefname{enumi}{part}{parts}

\usepackage[retainorgcmds]{IEEEtrantools}

\usepackage{tocloft}

\newcommand{\x}{\otimes}

\renewcommand{\Pr}{\mathrm{Pr}}

\DeclarePairedDelimiter{\set}{\lbrace}{\rbrace}

\DeclarePairedDelimiter{\norm}{\lVert}{\rVert}
\DeclarePairedDelimiter{\of}{\lparen}{\rparen}

\DeclarePairedDelimiter{\floor}{\lfloor}{\rfloor}
\DeclarePairedDelimiter{\ceiling}{\lceil}{\rceil}

\let\originalleft\left
\let\originalright\right
\renewcommand{\left}{\mathopen{}\mathclose\bgroup\originalleft}
\renewcommand{\right}{\aftergroup\egroup\originalright}

\usepackage{trimclip}
\usepackage{opencolor}

\makeatletter 
\newcommand*\unevendelim[3]{{\mathpalette\unevendelim@{{#1}{#2}{#3}}}}
\def\unevendelim@#1#2{\unevendelim@@{#1}#2}%
\def\unevendelim@@#1#2#3#4{%
    \sbox0{$\m@th#1#4$}%
    \sbox6{$\m@th#1\{\}$}%
    \unevendelim@@@{#1}{\left#2}{\right.}%
    \copy0
    \unevendelim@@@{#1}{\left.}{\right#3}%
}
\def\unevendelim@@@#1#2#3{%
  \sbox2{$\m@th#1#2\rule{0pt}{\ht0}#3$}%
  \sbox4{$\m@th#1#2\rule[-\dp0]{0pt}{\dp0}#3$}%
  \ifdim\ht2>\ht4
    \ooalign{\clipbox{0pt {\dimexpr\dp2-\dp6\relax} 0pt 0pt}{\copy2}\cr
             \raisebox{-\dp6}{\clipbox{0pt 0pt 0pt {\dimexpr2\ht2-\ht4+\dp6}}{\copy2}}\cr}
  \else
    \ooalign{\raisebox{\ht6}{\clipbox{0pt {\dimexpr2\dp4-\dp2+\ht6} 0pt 0pt}{\copy4}}\cr
             \clipbox{0pt 0pt 0pt {\dimexpr\ht4-\ht6}}{\copy4}\cr}
  \fi
}
\makeatother  

\DeclareMathOperator{\Tr}{Tr}

\newcommand{\SWAPGADGET}{\normalfont\textsc{Swapgadget}}
\newcommand{\SWAP}{\normalfont\textsc{Swap}}
\newcommand{\PURIFY}{\normalfont\textsc{Purify}}

\newcommand{\C}{\mathbb{C}}

\newcommand{\vlambda}{\bm{\lambda}} 


\newcommand{\e}{\delta}

\usepackage{color}

\newtheorem{theorem}{Theorem}
\newtheorem{lemma}[theorem]{Lemma}
\newtheorem{proposition}[theorem]{Proposition}
\newtheorem{definition}[theorem]{Definition}
\newtheorem{corollary}[theorem]{Corollary}

\newcommand{\bp}{\textbf{q}}
\newcommand{\blambda}{\boldsymbol{\lambda}}
\newcommand{\eff}{\text{eff}}
\newcommand{\nlog}{\ln}
\usepackage{calligra}
\usepackage[T1]{fontenc}
\newcommand{\st}{{\calligra s}}

\begin{document}

\title{Streaming quantum state purification for general mixed states}

\author[1]{Daniel Grier}
\author[2,3]{Debbie Leung}
\author[4,3]{Zhi Li}
\author[5,6]{Hakop Pashayan}
\author[2,7]{Luke Schaeffer}

\renewcommand*{\Authfont}{\normalsize}
\renewcommand\Affilfont{\itshape\footnotesize}

\affil[1]{Department of Mathematics and Department of Computer Science and Engineering, University of California, San Diego}

\affil[2]{Institute for Quantum Computing, University of Waterloo}

\affil[3]{Perimeter Institute for Theoretical Physics}

\affil[4]{National Research Council of Canada}

\affil[5]{Dahlem Center for Complex Quantum Systems, Freie Universität Berlin}

\affil[6]{Hon Hai (Foxconn) Research Institute}

\affil[7]{Department of Computer Science, Institute for Advanced Computer Studies, and Joint Center for Quantum \break Information and Computer Science, University of Maryland}

\date{}

\maketitle

\begin{abstract}
Given multiple copies of a mixed quantum state with an unknown, nondegenerate principal eigenspace, quantum state purification is the task of recovering a quantum state that is closer to the principal eigenstate.
A streaming protocol relying on recursive swap tests has been proposed and analysed for noisy depolarized states with arbitrary dimension and noise level.
Here, we show that the same algorithm applies much more broadly, enabling the purification of arbitrary mixed states with a nondegenerate principal eigenvalue.
We demonstrate this through two approaches.
In the first approach, we show that, given the largest two eigenvalues, the depolarized noise is the most difficult noise to purify for the recursive swap tests, thus the desirable bounds on performance and cost follow from prior work.  
In the second approach, we provide a new and direct analysis for the performance of purification using recursive swap tests for the more general noise.
We also derive simple lower bounds on the sample complexity, showing that the recursive swap test algorithm attains optimal sample complexity (up to a constant factor) in the low-noise regime.

\end{abstract}

\setlength{\cftbeforesecskip}{4pt}
\renewcommand\cftsecafterpnum{\vskip2pt}
\setcounter{tocdepth}{2}
\tableofcontents

\clearpage

\section{Introduction}

The quantum state purification problem can be stated as follows.  
Let $\ket{\psi} \in \C^d$ be an unknown pure $d$-dimensional state (or \emph{qudit}).
In the \emph{qudit purification problem}, we are given multiple (iid) noisy copies of $\ket{\psi}$ and wish to produce one copy of $\ket\psi$ with less noise.  A particularly
well-studied case is the depolarized state where each noisy copy is a depolarized version of $\ket{\psi}$: 
\begin{equation}
  \rho(\e)
  := (1 - \e) \proj{\psi}
  + \frac{\e}{d} \, I,
  \label{eq:input state 1}
\end{equation}
and $0<\delta<1$. More generally, the noisy copies may take the following form
\begin{equation}
  \tilde{\rho}(\eta)= (1-\eta) \proj{\psi} + \eta \psi^{\perp},
  \label{eq:input state 2}
\end{equation}
where $\psi^{\perp}$ is a state orthogonal to $\proj{\psi}$ and $\ket{\psi}$ is the unique principal eigenvector of $\tilde{\rho}(\eta)$.  
The goal of state purification is to output a quantum state that is closer to $\ket{\psi}$ than the given noisy copies (say, within $\epsilon$ trace distance 
from $\proj{\psi}$).  
Note that removing noise increases the distinguishability of different possible underlying unknown states, which is forbidden for protocols that only use a single copy of the state; thus, purification algorithms necessarily have access to multiple copies of the unknown state.  The quality of the output state is generally improved by having more copies of the noisy state.

In this paper, we are interested in the number of samples that suffice to produce a copy with arbitrarily high fidelity under the more general noise model given in \cref{eq:input state 2}.  As the purification process is intrinsically probabilistic, the number of samples consumed to achieve the task can vary, and we focus on characterizing the expected sample complexity. 
Our problem formulation and noise model follow  \cite{childs2025streaming,GPS24}. These generalize the initial purification task considered by \cite{CEM99,KW01} from purification of a qubit ($2$-dimensional state) to purification of a qudit ($d$-dimensional state).  Note also in the qubit case, the two noise models (\cref{eq:input state 1} and \cref{eq:input state 2}) coincide.

We first summarize some related prior results before stating ours.
In 1998, Cirac, Ekert, and Macchiavello identified the optimal procedure for the purification of a single qubit system \cite{CEM99}. Shortly thereafter, Keyl and Werner considered several variants of the qubit purification problem, with in depth analysis on multiple output copies under various success  measures \cite{KW01}.
Little was known about purification of higher dimensional states until 2016, when
the optimal protocol for 3-dimensional depolarized quantum states was reported and analysed in \cite{futhesis}.  
In 2024, \cite{LFIC24} extended this protocol to depolarized quantum states of arbitrary dimension.
 These optimal protocols rely on weak Schur sampling which requires a collective measurement that can be computationally involved.

In 2023, \cite{childs2025streaming} proposed and analysed a purification protocol for arbitrary dimension for depolarized noise based on recursive swap tests.  
For any constant depolarizing noise strength $\delta < 1$ in \cref{eq:input state 1}, this method has optimal sample complexity $\Theta(1/\epsilon)$; low circuit complexity $O((\log d)/\epsilon)$; and low space complexity $O(\log d \log (1/\epsilon))$. 
In 2024, \cite{GPS24} applied the same protocol to the more general noise model in \cref{eq:input state 2}, and 
claimed, without proof,\footnote{Indeed, this claim was based on work in progress which has come to form part of the present paper.} that a similar sample complexity could be obtained if the principal eigenvalue $1-\eta$ is greater than 1/2.

Our main contribution is a performance analysis of the recursive swap test purification protocol that extends (i) to  the more general noise model in \cref{eq:input state 2} and (ii) to an arbitrarily small spectral gap.
This generalizes the results in both \cite{childs2025streaming,GPS24}.  
This protocol acts only on a few quantum systems at a time, and is computationally less intensive compared to weak Schur sampling, which acts more collectively.  Furthermore, there is a streaming algorithm with very low quantum memory requirement.  

We present two different proofs 
on the rate of convergence of the principal eigenvalue in the recursive swap test purification protocol. These make distinct conceptual contributions. 

Our first proof shows that in a particular sense, the depolarizing noise model in \cref{eq:input state 1} is the hardest case. This allows us to deduce the sample complexity through a reduction to the special case dealt with in \cite{childs2025streaming}. More specifically, we show that under the recursive swap test purification protocol, depolarized states are the most difficult to purify (for a given pair of largest and second largest eigenvalues).
Thus, the sample complexity in \cite{childs2025streaming} for arbitrarily high noise can be applied to the more general noise model (see \cref{eq:input state 2}) for arbitrary spectrum with non-degenerate principal eigenvalue.  

Our second proof presents a new and elementary performance analysis for the recursive swap test. This proof is entirely self contained and tracks the changes in the entire spectrum. The techniques introduced in this proof may provide tighter upper bounds for more specific noise models.

Both approaches establish a similar sample complexity bound. While the sample complexity of purification is quite complicated (see \Cref{thm:proof1_sample_complexity} and \Cref{thm:proof2_sample_complexity}), we present a simplified version below to give some intuitive sense of the general magnitudes:

\begin{theorem}[simplified]
The expected sample complexity of purification under the general noise model is $4^{O(1/(\lambda_1 - \lambda_2))}(1-\lambda_1)/\epsilon$, 
where $\lambda_1$ and $\lambda_2$ are the top two eigenvalues of the given unknown noisy state.
\end{theorem}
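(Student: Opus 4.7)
The plan is to directly analyze the spectrum evolution of $\rho$ under one round of the recursive swap test. Given two independent copies of $\rho$, a single round applies the symmetric projector $\Pi_{\mathrm{sym}} = (I+\mathrm{SWAP})/2$, post-selects on the pass outcome, and traces out one subsystem. Using $\Tr_2[\mathrm{SWAP}\,(\rho\otimes\rho)] = \rho^2$, a short computation gives the surviving state $(\rho+\rho^2)/(1+\Tr(\rho^2))$, so each eigenvalue transforms as $\lambda_i \mapsto \lambda_i(1+\lambda_i)/(1+\Tr(\rho^2))$ with pass probability $p = (1+\Tr(\rho^2))/2 \in [1/2,1]$; the expected per-level sample multiplier is therefore $2/p = 4/(1+\Tr(\rho^2)) \in [2,4]$. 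As a potential I use $R := (1-\lambda_1)/\lambda_1$, for which the target $\lambda_1 \geq 1-\epsilon$ is equivalent to $R \leq \epsilon/(1-\epsilon)$. Writing $\Delta := \lambda_1 - \lambda_2$ and using the elementary bound $\sum_{i\geq 2}\lambda_i^2 \leq \lambda_2(1-\lambda_1)$, a few lines of algebra give
\begin{equation*}
  R' \;\leq\; R\,\frac{1+\lambda_2}{1+\lambda_1} \;=\; R\left(1 - \frac{\Delta}{1+\lambda_1}\right),
\end{equation*}
so $R$ contracts at rate $1-\Omega(\Delta)$ per round. A separate short computation, verifying $\lambda_1+\lambda_2 \geq \Tr(\rho^2)$, shows that $\Delta$ is itself non-decreasing across rounds, so this contraction rate never weakens.

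To obtain the claimed $\epsilon$-dependence I split the rounds into a \emph{hard phase} ($\lambda_1 < 1/2$, say) and an \emph{easy phase} ($\lambda_1 \geq 1/2$). In the hard phase the per-round cost is at most $4$ and $R$ shrinks by factor at most $1-\Delta/2$, so we exit after $O(1/\Delta)$ rounds at total cost $4^{O(1/\Delta)}$. In the easy phase both the per-round cost $4/(1+\Tr(\rho^2))$ and the contraction factor $(1+\lambda_2)/(1+\lambda_1)$ tend jointly to $2$ and $1/2$ as $\lambda_1 \to 1$; the goal is to show that, summed over easy rounds, the product of per-round costs telescopes to $O(R_{\mathrm{enter}}/R_{\mathrm{final}})$. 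Since $R_{\mathrm{enter}} \lesssim 1-\lambda_1$ (either $R_{\mathrm{enter}}\leq 1$ after the hard phase, or the hard phase is skipped and $R_{\mathrm{enter}} = R_{\mathrm{initial}}\leq 2(1-\lambda_1)$) and $R_{\mathrm{final}} = \Theta(\epsilon)$, the easy phase costs $O((1-\lambda_1)/\epsilon)$, and multiplying the two phases yields the claimed $4^{O(1/\Delta)}(1-\lambda_1)/\epsilon$.

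The main obstacle is the easy-phase bound. A crude argument (per-round cost $\leq 4$, contraction $\leq 1-\Delta/2$) only gives the much worse $(1/\epsilon)^{O(1/\Delta)}$ scaling, so cost must be coupled to contraction. Concretely, I expect to show that in the easy phase the product $\frac{4}{1+\Tr(\rho_i^2)}\cdot\frac{1+\lambda_2^{(i)}}{1+\lambda_1^{(i)}}$ equals $1+O(1-\lambda_1^{(i)})$, and since $1-\lambda_1^{(i)}$ decays geometrically there, the sum $\sum_i(1-\lambda_1^{(i)})$ converges; this converts $\prod_i(\mathrm{cost}_i)$ into $\prod_i(\mathrm{contraction}_i)^{-1}\cdot O(1) = O(R_{\mathrm{enter}}/R_{\mathrm{final}})$, which is the telescoping that powers the linear $1/\epsilon$ scaling. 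Pinning down this tight balance near the pure-state fixed point, and choosing the hard/easy threshold so that both phase bounds match up cleanly, is where all the care is needed.
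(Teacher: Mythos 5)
Your one-round analysis (spectral map $\lambda_i \mapsto \lambda_i(1+\lambda_i)/(1+\Tr\rho^2)$, pass probability, monotonicity of $\Delta$, and the contraction $R' \leq R\frac{1+\lambda_2}{1+\lambda_1}$) is correct, and your easy-phase idea of coupling the per-round cost to the contraction so that the product telescopes to $O(R_{\mathrm{enter}}/R_{\mathrm{final}})$ is essentially the paper's third-stage argument (cost $\leq 2e^{\eta}$ per level, $\eta$ summable). The genuine gap is in the hard phase. You assert that you exit after $O(1/\Delta)$ rounds, but the only facts you establish are that $R$ contracts by a factor $1-\Delta/(1+\lambda_1) \leq 1-\Delta/2$ per round and that $\Delta$ is non-decreasing. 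From these alone, reducing $R$ from its initial value $R_{\mathrm{init}}=(1-\lambda_1)/\lambda_1$ down to $1$ takes $\Theta\of{\Delta^{-1}\ln R_{\mathrm{init}}}$ rounds, and $R_{\mathrm{init}}$ can be as large as $d-1$: for a nearly maximally mixed state with $\lambda_1 \approx 2/d$, $\lambda_2 \approx 1/d$, $\Delta \approx 1/d$, your bound gives $\Theta(d\log d)$ hard-phase rounds, i.e.\ total cost $4^{\Theta(\Delta^{-1}\ln(1/\Delta))}$, which is \emph{not} $4^{O(1/\Delta)}$. Moreover the contraction bound $R'/R \leq \frac{1+\lambda_2}{1+\lambda_1}$ is essentially tight for depolarized-like spectra, so no better constant rescues the argument: the $O(1/\Delta)$ round count is true for the actual dynamics only because the gap $\Delta$ \emph{grows} (roughly by a factor $1+\Delta$ per round), an acceleration your potential argument, which only uses $\Delta \geq \Delta_{\mathrm{init}}$, cannot see.

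This is exactly where the paper's two proofs do their real work. Proof 1 reduces to the depolarized (generalized-spectrum) recursion and reuses the logistic-type analysis of that recursion, which tracks the growth of $\Delta$ and yields $\frac{1}{\Delta}+2\ln\frac{1}{\Delta}$ levels. Proof 2 replaces your potential $R$ by $E(\vlambda)=\sum_{j\geq 2}\frac{\lambda_j}{\lambda_1-\lambda_j}$, whose per-round decay factor is $\frac{1+\lambda_j}{1+\lambda_1+\lambda_j}\leq \exp\of[\big]{-\Omega(\lambda_1)}$ -- a rate governed by $\lambda_1$, not by $\Delta$ -- and then uses $E \leq \frac{1-\lambda_1}{\Delta}$ together with $\frac{1}{\lambda_1}\ln\frac{\lambda_1}{\Delta} \leq \frac{1}{e\Delta}$ to bound the early stages by $O(1/\Delta)$ levels. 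To repair your proof you would need one of these ingredients (track the growth of $\Delta$, or switch to a potential with decay rate tied to $\lambda_1$). A secondary, fixable issue: upon entering your easy phase at $\lambda_1\approx\tfrac12$ with a still-tiny gap, $1-\lambda_1$ does not yet decay geometrically, so $\sum_i(1-\lambda_1^{(i)})$ is $O(1/\Delta)$ rather than $O(1)$; this extra $e^{O(1/\Delta)}$ factor is absorbable into $4^{O(1/\Delta)}$, but as written your claim that the sum converges to a constant is not justified.
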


Moreover, we provide simple worst-case lower bounds for the sample complexity for purifying general states, relying on optimal results on depolarized states \cite{CEM99,LFIC24} and following a proof technique similar to that in \cite{childs2025streaming}.
Our result shows that the recursive swap test algorithm achieves optimal sample complexity (up to a constant factor) in the low noise regime.

We mention that very recently, version 2 of \cite{LFIC24} further reports on independent results showing an asymptotically optimal tradeoff between output fidelity and sample complexity for the more general noise model in \cref{eq:input state 2}. This relies on a Schur sampling protocol.
In comparison,
for constant noise, 
our method requires small quantum memory and quite small circuits, and is sample optimal (up to a constant factor), but it becomes suboptimal as the given noisy state becomes close to maximally mixed (or as the spectral gap vanishes).  For the very noisy regime, a third purification method using state tomography remains  efficient in the sample complexity in terms of the dependence on the noise paramater, but becomes inefficient with growing  dimension.  These three purification methods are apparently incomparable. There is no known purification protocol that encompasses all the advantages. A sample and circuit efficient purification may involve an initial collective stage if the noise parameter is high, which acts on multiple but not very large number of copies, followed by a more circuit efficient method such as the recursive swap purification protocol, once the initial noise becomes lower.

There are other variations of the quantum state purification problem.  We mention a few here for contrast, but note that they are out of the current scope.  
One variation of the problem revolves around the tradeoff between the sample complexity, fidelity, \emph{and the probability of successful purification}.  Fruitful studies have been made using semidefinite programming and block encoding techniques, such as those reported in \cite{F04,YCHCW24}.  
Another variation is concerned with error suppression wherein the performance is analysed as order expansion in the noise parameter, thus these approaches mostly apply to very low noise; see for example, \cite{YKLO24}.  
Another variation is broadly under the subject of virtual purification or error mitigation. In this line of research, instead of providing an approximation of $\ket{\psi}$ as a quantum state that is used by other quantum information processing task, one argues that these tasks end with measurements, and it suffices to have a protocol that approximates the final measurement statistics.  Examples includes improved methods for quantum metrology \cite{YESHMT22}, and extensions for purification of channels \cite{LZFC24}.

The paper is structured as follows.  In \cref{sec:swap}, we describe the swap test and the purification protocol based on recursive swap test and summarize known performance results in \cite{childs2025streaming,GPS24}.  In \cref{sec:analysis}, we present two analysis on the performance of recursive swap test for the more general noise model \cref{eq:input state 2}, showing that the convergence of the principal eigenvalue is no slower than for the first noise model \cref{eq:input state 1}. Our result extends various applications of quantum state purification discussed in \cite{childs2025streaming,GPS24} to higher noise regime or more general noise model.  
In \cref{sec:lowbdd} we present our lower bounds for the sample complexity for purifying general states.

\section{Purification using the swap test}\label{sec:swap}

We summarize the swap test, the swap test gadget, and the quantum state purification protocol 
based on the recursive swap tests in this section.  The proofs of the main claims in this section are elementary to verify and can also be found in \cite{childs2025streaming}.

\subsection{The swap test} \label{sec:swaptest}

The swap test, with two $d$-dimensional inputs and one $d$-dimensional output, 
is given by the circuit in \cref{fig:swap}, which we adapt from \cite{childs2025streaming}.  
\begin{figure}[ht]
\centering
\tikzset{trashcan/.style={path picture={
		\coordinate (bin) at ($(path picture bounding box.center)+(0,-0.25cm)$);
		\draw[internal,inner sep=0pt,-stealth,thickness,rounded corners] (path picture bounding box.west) -| (bin);
		\draw[thickness] (bin) circle[x radius = 0.2cm, y radius = 0.1cm] ++(0.2cm,0) -- ++(-0.06cm,-0.35cm) arc [start angle = 0, end angle = -180, x radius = 0.14cm, y radius = 0.07cm] -- ++(-0.06cm,0.35cm);	
		},
		minimum height=3.6em,
		minimum width=2em}}
\begin{quantikz}[row sep={0.7cm,between origins},column sep=0.4cm]
    \lstick{$\ket{0}$} & \gate{H} & \ctrl{2} & \gate{H} & \meter{} & \setwiretype{c} \rstick{$a$} \\
     & & \targX{} & & & \\
     & & \targX{} & & |[trashcan]| {} & \setwiretype{n} \
\end{quantikz}
\caption{\label{fig:swap} Quantum circuit for the swap test, with two inputs on the left, and one output on the right.  The gate in the middle with two crosses denotes the controlled-swap gate, while $H$ labels the Hadamard gate.  The measurement box on the first qubit is along the computational basis, while the trash bin on the last system represents a partial trace.}
\end{figure}
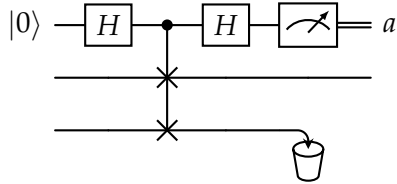

Let $X$ be the swap operator.  If the joint input is $\mu$ and the measurement outcome is $a \in \set{0,1}$ in \cref{fig:swap}, the sub-normalized post-measurement state (indicated by the prime) on the remaining registers is
\begin{equation}
  \SWAP'(\mu,a) = \frac{I + (-1)^a X}{2} \; \mu \; \frac{I + (-1)^a X}{2} \,.
  \label{eq:spoutboth}
\end{equation}
The probability of obtaining the outcome $a$ is given by the 
trace of the above.  
Furthermore, if $\mu = \rho \x \sigma$ is in product form, 
each outcome happens with probability 
\begin{equation}
  \frac{1}{2} \of[\big]{1 + (-1)^a \Tr(\rho\sigma)} \,.
  \label{eq:trace}
\end{equation}
Discarding the bottom register in the circuit gives the subnormalized output state 
\begin{equation}
	\SWAP'(\rho \otimes \sigma,a) = \frac{1}{4} \of[\big]{\rho + \sigma + (-1)^a \rho \sigma + (-1)^a \sigma \rho}.
  \label{eq:final_state}
\end{equation}
We are mostly interested in the output state corresponding to the measurement outcome $a = 0$: 
\begin{equation}
  \SWAP(\rho \otimes \sigma)
  := \frac{1}{2} \cdot
     \frac{\rho + \sigma + \rho \sigma + \sigma \rho}
          {1 + \Tr(\rho\sigma)},
  \label{eq:SWAP}
\end{equation}
where we omit the label $a=0$ and also remove the prime to indicate the normalized output state.
\Cref{alg:swap} (adapted from \cite{childs2025streaming} with minor modification) specifies a \emph{$\SWAP$ gadget} that performs the swap test on fresh copies of the available input state until it succeeds in obtaining the $a = 0$ outcome.

\begin{algorithm}[H]
\caption{$\normalfont\textsc{Swapgadget}$ (adapted from \cite{childs2025streaming}).}
\label{alg:swap}
\emph{This procedure has access to a stream of inputs $\mu$, and uses as many copies of them as necessary.}
\begin{algorithmic}[1]
\Procedure{$\normalfont\textsc{Swapgadget}$}{$\mu$}
\Repeat
  \State Ask for a fresh copy of $\mu$.
  \State Apply the swap test shown in \cref{fig:swap} and denote the measurement outcome by $a$.
\Until $a=0$
\State \Return the state of the output register
\EndProcedure
\end{algorithmic}
\end{algorithm}

Here, we mostly consider applying the swap test when $\rho = \sigma$.    
From \cref{eq:trace}, the probability of a successful swap test (i.e. observing the outcome $a = 0$) is given by
\begin{equation}
  \Pr(a=0~|~\rho)=\frac{1}{2} \of[\big]{1 + \Tr (\rho^2)}.
  \label{eq:prob}
\end{equation}
In this case, the normalized output state is
\begin{equation}
  \SWAP(\rho^{\otimes 2}) = \frac{\rho + \rho^2}{1+\Tr(\rho^2)}.
  \label{eq:rout}
\end{equation}
Note that $\rho$, $\rho^2$ and hence also $\SWAP(\rho^{\otimes 2})$ have identical eigenvectors. Additionally, if the eigenvalues of $\rho$ are $\vlambda=(\lambda_1,\ldots,\lambda_d)$ with $\lambda_1 > \lambda_2 \geq \lambda_3 \geq \cdots \geq \lambda_d$, 
then the eigenvalues of $\SWAP(\rho^{\otimes 2})$ are 
\begin{equation}
\lambda_i' = \frac{\lambda_i + \lambda_i^2}{1+\norm{\vlambda}^2} \,,  
\label{eq:evals}
\end{equation}
where $\norm{\vlambda}=\sqrt{\sum_{j=1}^d \lambda_j^2}$ is the $2$-norm of the spectrum of $\rho$.
Our analysis relies heavily on two immediate consequences of \cref{eq:evals}.  
First, eigenvalue ordering is preserved under the swap test i.e. $\lambda_1' > \lambda_2' \geq \lambda_3' \geq \cdots \geq \lambda_d'$.  
Second,
\begin{align}
\lambda_1' - \lambda_1 
& = \frac{\lambda_1 + \lambda_1^2}{1+\norm{\vlambda}^2} - \lambda_1
= \frac{\lambda_1^2  \sum_{j=1}^d \lambda_j  - \lambda_1 \sum_{j=1}^d \lambda_j^2}{1+\norm{\vlambda}^2} 
= \frac{ \lambda_1  \sum_{j=2}^d (\lambda_1 - \lambda_j) \lambda_j }{1+\norm{\vlambda}^2} 
\label{eq:purer} 
\end{align}
which is positive unless $\rho$ is pure or maximally mixed -- in other words, if $a=0$, the 
weight of the principal eigenvector is strictly increased by the swap test for \emph{all}  
non-trivial noisy inputs.  
A recursive algorithm that applies the $\SWAP$ gadget on a pair of identical inputs produced in the previous level of the recursion thus makes the output purer at each subsequent level of recursion.  

As a side remark, note that \cref{eq:evals} applies to any $\rho$.  In particular, degenerate eigenspaces are preserved.  While we focus on noisy input $\rho$ with non-degenerate principal eigenvalue, \cref{eq:purer} states that, as long as $\rho$ is not maximally mixed, $\SWAP(\rho^{\otimes 2})$ will have strictly higher weight in the principal eigenspace compared to $\rho$, and recursive applications will suppress all other eigenvectors, providing a potentially useful generalization of the purification task for future study.

\subsection{Recursive purification based on the swap test} \label{sec:recursive}

Given access to many copies of $\rho_0$ which is either $\rho(\e)$ from \cref{eq:input state 1} or $
\tilde{\rho}(\eta)= (1-\eta) \proj{\psi} + \eta \psi^{\perp}$ from \cref{eq:input state 2}, consider the following recursive algorithm (adapted from \cite{childs2025streaming} with minor modification): 
\begin{algorithm}[H]
\caption{Recursive purification based on the swap test (adapted from \cite{childs2025streaming}).}
\label{alg:purify}
\emph{This procedure purifies a stream of states $\rho_0$ by recursively calling the $\SWAP$ gadget (\cref{alg:swap}).}
\begin{algorithmic}[1]
\Procedure{$\PURIFY$}{$n$}
\If{$n=0$}
\Comment{If at the bottom level of recursion,}
  \State \Return $\rho_0$
  \Comment{request one copy of $\rho_0$ from the stream and return it.}
\Else
\Comment{Otherwise call the next level of recursion}
  \State \Return 
  $\SWAPGADGET(\PURIFY(n-1)\otimes\PURIFY(n-1))$
  \Comment{until the $\SWAP$ gadget succeeds.}
\EndIf
\EndProcedure
\end{algorithmic}
\end{algorithm}

In other words, copies of $\rho_0$ are used by the swap gadget to produce copies of $\rho_1 = \SWAP(\rho_0^{\otimes 2})$, which in turns are used by the swap gadget to produce copies of $\rho_2 = \SWAP(\rho_1^{\otimes 2})$, and so on.  We recursively apply this for $n$ levels, until 
$\frac{1}{2}\| \rho_n - \proj{\psi} \|_1 \leq \epsilon$ for some desirable output noise level $\epsilon \ll \e$.  
Note that a call to $\SWAPGADGET(\mu)$ uses more than one copy of $\mu$ when some of the swap test rounds result in failure (i.e. $a=1$). Specifically, the expected number of copies of $\mu$ used in the call is $1\leq 1/\Pr(a=0 \mid \PURIFY(j-1)) \leq 2$ (see \cref{eq:prob}). The lower bound is saturated when $\mu$ is a tensor product of two identical pure states while the upper bound is saturated in the limit of $d\rightarrow \infty$ by a pair of maximally mixed states with dimension d. 
Whenever a swap test fails, the recursion is restarted at that level, which in turns calls for the previous level, and so on, cascading down in level until $\PURIFY(0)$'s are called which then directly use copies of the given noisy state $\rho_0$.

In the above algorithm, all swap tests act on the tensor product of two identical states, so, the eigenspaces of $\rho_0$ are preserved at all levels.  Thus the trace distance of any output state from the principal eigenvector is simply $1$ minus the principal eigenvalue of the output, and we focus on the latter.

Let $\rho_k := \PURIFY(k)$.  For depolarized noise, $\rho_k = \rho(\e_k)$ 
where 
\begin{align}
  \e_k &:= \frac{\e_{k-1} + \e_{k-1}^2/d}{2 p_{k-1}},  
  &
  p_{k-1} &:= 1 - \of[\Big]{1 - \frac{1}{d}} \e_{k-1} + \frac{1}{2} \of[\Big]{1 - \frac{1}{d}} \e_{k-1}^2,  
  \label{eq:PD}
\end{align}
and $p_k$ is the probability of getting outcome $a = 0$ when the swap test is applied on two copies of $\rho_{k-1}$. Setting the initial value $\e_0 = \e$, the above gives a recurrence for $\e_k$ and $p_k$.  For more general noisy input states \cref{eq:input state 2}, the recurrence for $p_k$ and for the principal eigenvalue of $\rho_k$ can be defined similarly using \cref{eq:prob,eq:rout}.  

Reference  \cite{childs2025streaming} also presents a stack-based implementation for $\PURIFY(n)$ in \cref{alg:purify} so that it uses only $n+1$ qudits of quantum memory plus one ancilla qubit.

\section{Analysis for general noise}\label{sec:analysis}

Our main result is, loosely, that the recursive swap test algorithm performs about as well with general noise as it does with depolarizing noise. We found two proofs of this fact with different advantages,
so we present them both. 

The first proof, in \Cref{sec:proof1}, has the style of a reduction, arguing that the problem can be reduced to analyzing the depolarizing case, albeit with a slightly generalized understanding of spectra. The second approach (\Cref{sec:proof2}) is a more typical stand-alone proof of the result.  

Before we present these proofs, we discuss some important notions concerning the spectrum of density matrices that are crucial for our proofs.  

\subsection{Preliminaries on spectrum}\label{sec:preliminaries}

Purification algorithms (and especially the swap test algorithm) are concerned with the \emph{spectrum} of the input state $\rho$, i.e., the vector of eigenvalues $\vlambda = (\lambda_1, \ldots, \lambda_d)$ in descending order. 

\begin{definition}
    In dimension $d \geq 2$, define the space of valid spectra, $V \subseteq \mathbb R_{\geq 0}^{d}$, as the set of non-negative real $d$-dimensional vectors $\vlambda = (\lambda_1, \ldots, \lambda_d)$ such that 
\begin{enumerate}
    \item $\vlambda$ is non-increasing,
    \item $\lambda_1 > \lambda_2$,
    \item $\sum_{i=1}^d \lambda_i=1$.
\end{enumerate}
\end{definition}
\noindent Notice that the definition requires a gap between $\lambda_1$ and $\lambda_2$, otherwise it is unclear which eigenspace purification should amplify. In fact, the gap is important for analysis, so we define $\Delta(\vlambda) := \lambda_1 - \lambda_2$, or simply $\Delta$ when there is no ambiguity what $\vlambda$ refers to. 

Since the algorithm repeatedly applies the swap test to identical pairs of input state, it is absolutely critical to quantify two things: the probability that the test succeeds and the spectrum of the output state conditioned on success.  
\begin{definition}
Let $v \colon V \to \mathbb R$ be the function $v(\vlambda) = \frac{1 + \norm{\vlambda}^2}{2}$.
\end{definition}
\begin{definition}
Define the map $F \colon V \to V$ such that 
for all $\vlambda \in V$, the output $F(\vlambda)$ has  
$i$-th entry given by
\begin{equation}
\label{eq:defnf}
F(\vlambda)_i 
:= \frac{\lambda_i (\lambda_i + 1)}{1 + \norm{\vlambda}^2}. 
\end{equation}
\end{definition}
\begin{proposition}
Let $\rho$ be a state with spectrum $\vlambda$. A swap test on two copies of $\rho$ succeeds with probability $v(\vlambda)$, and then discarding a copy leaves a state with spectrum $F(\vlambda)$ over the same eigenbasis as $\rho$.  Furthermore, $F(\vlambda)_1 > \lambda_1$.  
\end{proposition}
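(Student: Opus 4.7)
The plan is to unpack each of the three claims into the corresponding formula already derived in \cref{sec:swaptest}. For the success probability, \cref{eq:prob} gives $\Pr(a=0\mid\rho) = \tfrac{1}{2}(1+\Tr(\rho^2))$ when both inputs to the swap test are $\rho$; writing $\Tr(\rho^2) = \sum_i \lambda_i^2 = \norm{\vlambda}^2$ in terms of the spectrum turns this into exactly $v(\vlambda) = \tfrac{1}{2}(1+\norm{\vlambda}^2)$.

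For the spectrum of the output, the plan is to invoke \cref{eq:rout}: conditioned on $a=0$ and after tracing out the discarded register, the normalized state is $(\rho+\rho^2)/(1+\Tr(\rho^2))$. Since $\rho$ and $\rho^2$ are simultaneously diagonal in the eigenbasis of $\rho$, so is their normalized sum; and its $i$-th eigenvalue is $\frac{\lambda_i + \lambda_i^2}{1 + \norm{\vlambda}^2} = \frac{\lambda_i(\lambda_i+1)}{1+\norm{\vlambda}^2} = F(\vlambda)_i$, matching the definition of $F$. To confirm $F(\vlambda) \in V$, I would verify non-negativity (clear), that the entries sum to $1$ (since $\sum_i \lambda_i(\lambda_i+1) = \norm{\vlambda}^2 + 1$), and that the strict ordering together with the gap $\lambda_1 > \lambda_2$ is preserved under the strictly increasing map $x \mapsto x(x+1)$ on $[0,1]$.

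For the strict inequality $F(\vlambda)_1 > \lambda_1$, the cleanest approach is to reuse the identity already derived in \cref{eq:purer},
\[
F(\vlambda)_1 - \lambda_1 = \frac{\lambda_1 \sum_{j=2}^d (\lambda_1 - \lambda_j)\lambda_j}{1+\norm{\vlambda}^2}.
\]
Every summand is non-negative since $\vlambda$ is non-increasing, and the $j=2$ term is strictly positive whenever $\lambda_2 > 0$, because $\lambda_1 > \lambda_2$ by definition of $V$ and $\lambda_1 > 0$. The lone edge case is a pure $\rho$, for which all $\lambda_j = 0$ for $j \geq 2$; then $F$ fixes $\vlambda$, and this degenerate input already is the target principal eigenvector and so falls outside the scope of a non-trivial purification step.

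Since every component of the proposition follows immediately from formulas already established in \cref{sec:swaptest}, there is no real obstacle to overcome; the function of this proposition is primarily to repackage those formulas into the maps $v$ and $F$ that the subsequent analyses in \cref{sec:proof1} and \cref{sec:proof2} will iterate.
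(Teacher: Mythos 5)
Your proposal is correct and follows essentially the same route as the paper, whose entire proof is a one-line citation of \cref{eq:prob}, \cref{eq:evals}, and \cref{eq:purer} --- exactly the formulas you unpack. Your remark that $F(\vlambda)_1 > \lambda_1$ degenerates to equality for a pure state is a fair caveat that the paper leaves implicit in the qualification following \cref{eq:purer}, but it does not change the substance.
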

\begin{proof}
This follows from \cref{eq:prob}, \cref{eq:evals}, and \cref{eq:purer} 
\end{proof}

\subsection{Proof 1 -- Reduction from general spectrum to depolarizing noise}\label{sec:proof1}

In our first proof, we argue that for given $\lambda_1$ and $\lambda_2$, the worst case setting in recursive swap test purification corresponds to the depolarizing noise spectrum (where $\lambda_d=\lambda_2$). 
We wish to ensure generality of our result for all $\lambda_1>\lambda_2$ with $\lambda_1+\lambda_2\leq 1$. 
However, note that the dimension of a depolarized  state is determined by $\lambda_1$ and $\lambda_2$ as $d=1+(1-\lambda_1)/\lambda_2$.
This means that we need to compare spectra of systems with potentially different dimension.
Moreover, we need to permit choices with positive non-integer values of $(1-\lambda_1)/\lambda_2$ which for a depolarizing noise spectrum would ordinarily correspond to $d-1$. We achieve this by introducing a generalized notion of spectra and show that within this framework, for a given $\lambda_1$ and $\lambda_2$, the generalized spectrum consistent with depolarizing noise is the worst case setting in sample complexity. That is, it requires the most expected samples to purify using the swap test.

After defining generalized spectra (c.f. \cref{def:gen_spec}), we define a projector $S$ that sends any spectrum $\blambda$ to the unique generalized depolarizing noise spectrum with matching leading spectral elements $\lambda_1$ and $\lambda_2$. We define the function $F$ that maps any generalized spectrum to a new generalized spectrum corresponding to the spectrum of the output state of a successful swap test. In \cref{lem-compare}, we show that for all $k=0,1,2 ,\ldots$, and for all $\blambda$, the spectrum $F^{(k)}(\blambda)$ is at least as pure as $F^{(k)}(S(\blambda))$ where $F^{(k)}$ represents the $k$-fold application of the $F$ map. Finally, we observe that the results from \cite{childs2025streaming} showing the sample complexity associated with purification of depolarized states also apply to generalized depolarized states. We utilize this to derive upper bounds on the sample complexity for any state that has a spectrum with a given leading pair of eigenvalues.

\subsubsection{Generalized Spectra}

Technically, not all $\lambda_1$ and $\lambda_2$ correspond to a depolarizing-noise state.
In addition to the obvious constraints $\lambda_1+\lambda_2\leq 1$ and $\lambda_1\geq \lambda_2\geq 0$, the dimension $d$, given by $d=1 + \frac{1-\lambda_1}{\lambda_2}$ for a depolarizing-noise state needs to be a positive integer. 
Hence, we generalize the notion of spectrum to allow eigenvalues to occur a non-integer number of times. 
\begin{definition}[generalized spectra]\label{def:gen_spec}
A pair of real vectors, denoted as $\Lambda=(\bp,\blambda)$, where $\bp=(1,q_2,\cdots)$ and $\blambda=(\lambda_1,\lambda_2,\cdots)$, is called a \emph{generalized spectrum}, if
\begin{itemize}
    \item $q_1=1$, $q_i\geq 0~(\forall i)$; 
    \item $\lambda_1 > \lambda_2\geq\cdots\geq0$; 
    \item $\sum q_i\lambda_i=1$.
\end{itemize}
\end{definition}
Here, $\bp$ and $\blambda$ have the same length, which must be finite but is otherwise not so important because the vectors can be padded with zeros without changing the meaning.

\begin{definition}
    For a generalized spectrum $\Lambda=(\bp,\blambda)$, we define its gap as:
\begin{align}
    \Delta(\Lambda)=\lambda_1-\lambda_2.
\end{align}
We define two other generalized spectra $S(\Lambda)$ and $F(\Lambda)$ as:
\begin{align}
    S(\Lambda)&=((1,\frac{1-\lambda_1}{\lambda_2}), (\lambda_1,\lambda_2)),\label{eq-def-S}
    \\
    F(\Lambda)&=(\bp,\frac{1}{1+\sum q_i\lambda_i^2}(\lambda_1+\lambda_1^2,\lambda_2+\lambda_2^2,\cdots)). \label{eq-def-F}
\end{align}
\end{definition}
The fact that $S(\Lambda)$ and $F(\Lambda)$ satisfy the definition of generalized spectra are easy to check.
Denote $\mathcal{S}$ as the image of $S$.
Equivalently, $\mathcal{S}$ is the set of generalized spectra for which the dimensions of $\bp$ and $\blambda$ are 2:
\begin{equation}\label{eq-defSS}
    \mathcal{S}=\{((1,q_2),(\lambda_1,\lambda_2)) \mid
    q_2\geq0, \lambda_1 > \lambda_2\geq 0, \lambda_1+q_2\lambda_2=1\}.
\end{equation}
It is obvious that $S^2=S$, hence $S$ is a projection onto $\mathcal{S}$.

In the following $\lambda_1$ and $\Delta$ will be the most important parameters to characterize the hardness of purification.
The allowable space for $\lambda_1$ and $\Delta$ is:
\begin{equation}
    0< \Delta \leq \lambda_1 \leq 1.
\end{equation}
For generalized spectra correspond to eigenspectra of physical density matrices, $\lambda_1+\lambda_2\leq 1$, hence we also have:
\begin{equation}
    \Delta\geq 2\lambda_1-1.
\end{equation}

\begin{definition}[ordering]
    For two generalized spectra $\Lambda$ and $\Lambda'$, define $\Lambda \succeq \Lambda'$ if $\lambda_1\geq \lambda_1'$ and $\Delta(\Lambda)\geq \Delta(\Lambda')$. 
\end{definition}
Here, the dimensions (lengths) of $\Lambda$ and $\Lambda'$ do \emph{not} need to be equal, since we only care about the first two entries of $\vlambda$.
The relation $\succeq$ satisfies reflexivity and transitivity.
It is \emph{not} a partial order in general since it does not satisfy antisymmetry. 
It is a partial order on $\mathcal{S}$.

\subsubsection{Technical Lemmas}

\begin{lemma}\label{lem-FkeepS}
    If $\Lambda\in \mathcal{S}$  then $F(\Lambda)\in \mathcal{S}$.
\end{lemma}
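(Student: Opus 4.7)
The plan is to directly verify the three defining conditions of $\mathcal{S}$ from \cref{eq-defSS} for the output $F(\Lambda)$, exploiting the crucial observation that by \cref{eq-def-F} the map $F$ leaves the multiplicity vector $\bp$ untouched and only rescales/reshapes $\blambda$. In particular, if the input has $|\bp| = |\blambda| = 2$, so does the output, so $F(\Lambda)$ has the right ``shape'' to lie in $\mathcal{S}$.

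Concretely, write $\Lambda = ((1, q_2), (\lambda_1, \lambda_2))$ with $q_2 \geq 0$, $\lambda_1 > \lambda_2 \geq 0$, and $\lambda_1 + q_2\lambda_2 = 1$, and set $Z := 1 + \lambda_1^2 + q_2 \lambda_2^2 > 0$, so that $F(\Lambda) = ((1, q_2), (\lambda_1', \lambda_2'))$ with $\lambda_i' = (\lambda_i + \lambda_i^2)/Z$. The nonnegativity of $q_2$ is immediate since the first component is copied over. For the ordering $\lambda_1' > \lambda_2' \geq 0$, I would note that $x \mapsto x + x^2$ is strictly increasing and nonnegative on $[0, \infty)$, so $\lambda_1 > \lambda_2 \geq 0$ implies $\lambda_1 + \lambda_1^2 > \lambda_2 + \lambda_2^2 \geq 0$; division by $Z > 0$ preserves these inequalities.

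The main (essentially only) calculation is the normalization identity $\lambda_1' + q_2 \lambda_2' = 1$. This amounts to showing that the numerator $(\lambda_1 + \lambda_1^2) + q_2(\lambda_2 + \lambda_2^2)$ equals $Z$. Splitting into linear and quadratic parts, the linear part is $\lambda_1 + q_2 \lambda_2$, which is $1$ by the normalization of $\Lambda$; the quadratic part is $\lambda_1^2 + q_2 \lambda_2^2$, which matches the corresponding term in $Z$. Thus the ratio is $1$, and $F(\Lambda) \in \mathcal{S}$.

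There is no conceptual obstacle here: the lemma simply records that the swap-test update rule, repackaged in generalized-spectrum notation, preserves trace and strict ordering, so the only ``hard'' step is the one-line algebraic check of normalization.
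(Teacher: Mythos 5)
Your proposal is correct and follows essentially the same route as the paper: check the three conditions of \eqref{eq-defSS} directly, noting that $F$ leaves $\bp$ unchanged, that $x \mapsto x+x^2$ preserves the strict ordering, and that the normalization $\lambda_1' + q_2\lambda_2' = 1$ follows from splitting the numerator into its linear part (which equals $1$) and quadratic part (which equals the rest of the denominator). The paper states these checks more tersely, while you carry out the algebra explicitly, but the content is identical.
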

\begin{proof}
    The conditions of \eqref{eq-defSS} follow straightforwardly:
    \begin{enumerate}
        \item $F(\Lambda)$ does not change the $\bp$ vector, so $q_1 = 1$ and $q_2$ is nonzero. 
        \item $F(\Lambda)$ is a generalized spectrum, so the eigenvalues are descending.
        \item $\lambda_1 + q_2 \lambda_2 = 1$ follows from simple algebraic manipulation of \eqref{eq-def-F}. \qedhere
    \end{enumerate}
\end{proof}

\begin{lemma}\label{lem-Skeeporder}
    If $\Lambda\succeq \Lambda'$, then $S(\Lambda)\succeq S(\Lambda')$.
\end{lemma}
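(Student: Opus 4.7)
The plan is to observe that the ordering $\succeq$ is defined entirely in terms of $\lambda_1$ and $\Delta = \lambda_1 - \lambda_2$, and that the projection $S$ preserves both of these quantities by construction. So the lemma should follow immediately from unpacking definitions, with no real obstacle.

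More concretely, first I would write $\Lambda = (\bp, \blambda)$ and $\Lambda' = (\bp', \blambda')$ and unpack the hypothesis $\Lambda \succeq \Lambda'$ as the two numerical inequalities
\begin{equation*}
\lambda_1 \geq \lambda_1' \qquad \text{and} \qquad \lambda_1 - \lambda_2 \geq \lambda_1' - \lambda_2'.
\end{equation*}
Next I would read off the top two eigenvalues of $S(\Lambda)$ and $S(\Lambda')$ directly from \cref{eq-def-S}: by construction $S(\Lambda) = ((1, (1-\lambda_1)/\lambda_2), (\lambda_1, \lambda_2))$, so its leading eigenvalue is $\lambda_1$ and its gap is $\lambda_1 - \lambda_2 = \Delta(\Lambda)$; similarly $S(\Lambda')$ has leading eigenvalue $\lambda_1'$ and gap $\Delta(\Lambda')$.

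Therefore the two inequalities required to conclude $S(\Lambda) \succeq S(\Lambda')$ are exactly the two inequalities supplied by the hypothesis, and the lemma follows. Since there is nothing beyond this definitional unpacking, the ``hard part'' is really just a sanity check that $S$ is well-defined on the inputs (which needs $\lambda_2 > 0$, guaranteed whenever $\Lambda \notin \mathcal{S}$ is nontrivial, and holds automatically in $\mathcal{S}$ since $(1-\lambda_1)/\lambda_2$ appears in the definition), and that $S(\Lambda) \in \mathcal{S}$ so the ordering is in fact a partial order on its image — but this was already noted in the surrounding text.
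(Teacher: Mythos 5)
Your proof is correct and takes essentially the same route as the paper: both arguments simply note that $S$ preserves $\lambda_1$ and $\lambda_2$ (hence $\Delta$), so the inequalities defining $\Lambda \succeq \Lambda'$ carry over verbatim to $S(\Lambda)$ and $S(\Lambda')$. The extra well-definedness remarks are fine but not needed.
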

\begin{proof}
    Notice that $S$ preserves $\lambda_1$ and $\lambda_2$, and thus also $\Delta$. The inequalities in $\lambda_1$ and $\Delta$ implied by $\Lambda \succeq \Lambda'$ thus carry over to $S(\Lambda)$ and $S(\Lambda')$, giving us $S(\Lambda) \succeq S(\Lambda')$.
\end{proof}

\begin{lemma}\label{lem-Fkeeporder}
    If $\Lambda, \Lambda'\in \mathcal{S}$ and if $\Lambda\succeq \Lambda'$, 
    then $F(\Lambda)\succeq F(\Lambda')$.
\end{lemma}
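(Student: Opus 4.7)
The plan is to exploit the fact that on $\mathcal{S}$, every generalized spectrum is completely determined by its top two eigenvalues: since $q_1=1$ and $\sum_i q_i\lambda_i = 1$ force $q_2 = (1-\lambda_1)/\lambda_2$, the pair $(\lambda_1,\Delta)$ is the only free data. Consequently the relation $\succeq$ restricted to $\mathcal{S}$ is exactly the coordinatewise product order on $(\lambda_1,\Delta)$, and the lemma reduces to showing that the two scalar quantities $F(\Lambda)_1$ and $\Delta(F(\Lambda))$, viewed as functions of $(\lambda_1,\Delta)$ on the region $0 < \Delta \leq \lambda_1 \leq 1$, are non-decreasing in each coordinate separately.

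Substituting $\lambda_2 = \lambda_1 - \Delta$ and $q_2 = (1-\lambda_1)/\lambda_2$ directly into \eqref{eq-def-F} produces the explicit rational expressions
\begin{align*}
F(\Lambda)_1 &= \frac{\lambda_1(1+\lambda_1)}{N},
&
\Delta(F(\Lambda)) &= \frac{\Delta(1+\lambda_1+\lambda_2)}{N} = \frac{\Delta(1+2\lambda_1-\Delta)}{N},
\end{align*}
where $N := 1 + \lambda_1^2 + (1-\lambda_1)\lambda_2 = 1 + \lambda_1 - (1-\lambda_1)\Delta$. I would then compute and check the signs of four partial derivatives (quotient-rule, common denominator $N^2$).

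The key sign checks I anticipate are the following. (i) $\partial F(\Lambda)_1/\partial\Delta$ has numerator $\lambda_1(1+\lambda_1)(1-\lambda_1)\geq 0$. (ii) The numerator of $\partial F(\Lambda)_1/\partial\lambda_1$ regroups as $(1+\lambda_1)^2(1-\Delta) + 2\Delta\lambda_1^2$, both summands manifestly non-negative because $\Delta\leq 1$. (iii) The numerator of $\partial\Delta(F(\Lambda))/\partial\lambda_1$ admits the clean factorization $\Delta(1-\Delta)^2$. (iv) The numerator of $\partial\Delta(F(\Lambda))/\partial\Delta$ regroups as $(1+\lambda_1)\bigl[(1+2\lambda_1)-2\Delta\bigr] + (1-\lambda_1)\Delta^2$, which is $\geq 0$ since $\Delta\leq\lambda_1\leq 1$ forces $(1+2\lambda_1)-2\Delta\geq 1$.

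The main obstacle is not conceptual but purely algebraic: the raw quotient-rule expansions in (ii) and (iv) are several-term polynomials in $\lambda_1$ and $\Delta$ whose positivity is not evident term-by-term, and the crucial move is to recognize the right regrouping so that non-negativity follows immediately from the region constraints rather than from delicate inequalities. Once all four numerators are verified to be non-negative, monotonicity of $F(\Lambda)_1$ and $\Delta(F(\Lambda))$ in each coordinate, together with the product-order characterization of $\succeq$ on $\mathcal{S}$, yields $F(\Lambda)\succeq F(\Lambda')$, completing the proof.
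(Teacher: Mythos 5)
Your proposal is correct and follows essentially the same route as the paper: parametrize $\mathcal{S}$ by $(\lambda_1,\Delta)$, write $\lambda_1(F(\Lambda))$ and $\Delta(F(\Lambda))$ over the common denominator $N=1+\lambda_1-(1-\lambda_1)\Delta$, and verify that all four partial derivatives are non-negative (your regrouped numerators agree exactly with the paper's, e.g.\ $(1+\lambda_1)^2(1-\Delta)+2\Delta\lambda_1^2=(1-\Delta)(1+2\lambda_1)+(1+\Delta)\lambda_1^2$ and $(1+\lambda_1)(1+2\lambda_1-2\Delta)+(1-\lambda_1)\Delta^2=(1-\Delta)^2+(3-2\Delta-\Delta^2)\lambda_1+2\lambda_1^2$), so monotonicity in the product order on $(\lambda_1,\Delta)$ gives $F(\Lambda)\succeq F(\Lambda')$.
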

\begin{proof}
    For $\Lambda\in\mathcal{S}$, we have $q_2=\frac{1-\lambda_1}{\lambda_2}$, $\lambda_2=\lambda_1-\Delta$.
    Therefore, $$1+\sum q_i\lambda_i^2 = 1 + \lambda_1^2 + \frac{1-\lambda_1}{\lambda_2} \lambda_2^2 = 1 + \lambda_1^2 + \lambda_2 - \lambda_1 \lambda_2 = 1 + \lambda_2 + \lambda_1 \Delta = 1 + \lambda_1 - (1-\lambda_1) \Delta.$$  
    Substituting this into \cref{eq-def-F},
    \begin{align}
        \lambda_1(F(\Lambda))&=\frac{\lambda_1+\lambda_1^2}{1+\lambda_1-(1-\lambda_1)\Delta},\\
        \Delta(F(\Lambda))&=\frac{\Delta(1+2\lambda_1-\Delta)}{1+\lambda_1-(1-\lambda_1)\Delta}.
    \end{align}
    Explicit calculation shows that:
    \begin{align}
        \frac{\partial \lambda_1(F(\Lambda))}{\partial\lambda_1}
        &=\frac{(1-\Delta)(1+2\lambda_1)+(1+\Delta)\lambda_1^2}{(1+\lambda_1-(1-\lambda_1)\Delta)^2}\geq 0 \,,\\
        \frac{\partial \lambda_1(F(\Lambda))}{\partial\Delta}
        &=\frac{\lambda_1(1-\lambda_1^2)}{(1+\lambda_1-(1-\lambda_1)\Delta)^2}\geq 0 \,,\\
        \frac{\partial \Delta(F(\Lambda))}{\partial\lambda_1}
        &=\frac{(1-\Delta)^2\Delta}{(1+\lambda_1-(1-\lambda_1)\Delta)^2}\geq0 \,,\\
        \frac{\partial \Delta(F(\Lambda))}{\partial\Delta}
        &=\frac{
        (1-\Delta)^2+(3-2\Delta-\Delta^2)\lambda_1+2\lambda_1^2
                }{(1+\lambda_1-(1-\lambda_1)\Delta)^2}\geq 0 \,.\label{eq-monoDeltaDelta}
    \end{align}
    In the last inequality, we use the fact that $0<\Delta\leq 1$.
    Therefore $F(\Lambda)\succeq F(\Lambda')$.
\end{proof}

\begin{lemma}\label{lemma1}
    $F(\Lambda)\succeq F(S(\Lambda))$.
\end{lemma}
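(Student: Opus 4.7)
The plan is to compare the two generalized spectra $F(\Lambda)$ and $F(S(\Lambda))$ by computing their leading eigenvalues and gaps directly, then showing that both quantities have the same numerators but different denominators, with the denominator for $F(\Lambda)$ being smaller.

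First I would write out the denominator $N(\Lambda) := 1 + \sum_{i} q_i \lambda_i^2$ for each spectrum. For the original $\Lambda$, this is
\[
N(\Lambda) = 1 + \lambda_1^2 + \sum_{i \geq 2} q_i \lambda_i^2,
\]
while for $S(\Lambda) = ((1, (1-\lambda_1)/\lambda_2), (\lambda_1,\lambda_2))$ this simplifies to
\[
N(S(\Lambda)) = 1 + \lambda_1^2 + (1-\lambda_1)\lambda_2.
\]
The key observation is that for $i \geq 2$ we have $\lambda_i \leq \lambda_2$, so $\lambda_i^2 \leq \lambda_2 \lambda_i$. Combining this with the normalization condition $\sum_i q_i \lambda_i = 1$ (which gives $\sum_{i \geq 2} q_i \lambda_i = 1 - \lambda_1$) yields
\[
\sum_{i \geq 2} q_i \lambda_i^2 \;\leq\; \lambda_2 \sum_{i \geq 2} q_i \lambda_i \;=\; \lambda_2 (1 - \lambda_1),
\]
so $N(\Lambda) \leq N(S(\Lambda))$. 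This is the only nontrivial step, and it is a one-line application of $\lambda_i \leq \lambda_2$.

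Next I would observe that both $F(\Lambda)$ and $F(S(\Lambda))$ share the same $\lambda_1$ and $\lambda_2$ before the normalization by the denominator is applied. Hence
\[
\lambda_1(F(\Lambda)) = \frac{\lambda_1 + \lambda_1^2}{N(\Lambda)}, \qquad \lambda_1(F(S(\Lambda))) = \frac{\lambda_1 + \lambda_1^2}{N(S(\Lambda))},
\]
and analogously
\[
\Delta(F(\Lambda)) = \frac{(\lambda_1 - \lambda_2)(1 + \lambda_1 + \lambda_2)}{N(\Lambda)}, \qquad \Delta(F(S(\Lambda))) = \frac{(\lambda_1 - \lambda_2)(1 + \lambda_1 + \lambda_2)}{N(S(\Lambda))}.
\]
The numerators on corresponding sides are identical, so the inequality $N(\Lambda) \leq N(S(\Lambda))$ immediately gives $\lambda_1(F(\Lambda)) \geq \lambda_1(F(S(\Lambda)))$ and $\Delta(F(\Lambda)) \geq \Delta(F(S(\Lambda)))$, which is precisely the definition of $F(\Lambda) \succeq F(S(\Lambda))$.

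The main (and really only) obstacle is identifying that the comparison boils down to a single inequality between the two normalization denominators; once that is seen, the proof is short and purely algebraic. A subtle point to confirm is that the numerators genuinely coincide — that is, that the swap-test formula produces the same unnormalized leading entries $\lambda_j + \lambda_j^2$ regardless of the rest of the spectrum and of the multiplicities $q_i$. This is immediate from the definition of $F$ in \cref{eq-def-F}, so the proof should be essentially a short calculation with no hidden difficulty.
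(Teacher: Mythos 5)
Your proof is correct and follows essentially the same route as the paper's: both reduce the claim to the single denominator inequality $\sum_i q_i\lambda_i^2 \leq \lambda_1^2 + (1-\lambda_1)\lambda_2$, proved via $\lambda_i \leq \lambda_2$ for $i\geq 2$ together with $\sum_{i\geq 2} q_i\lambda_i = 1-\lambda_1$. Your version just spells out more explicitly why the matching numerators make this suffice for both $\lambda_1$ and $\Delta$, which the paper leaves implicit.
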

\begin{proof}
    Due to \cref{eq-def-S}, $\Lambda$ and $S(\Lambda)$ have the same $\lambda_1$ and $\lambda_2$ component.
    Therefore, it suffices to show
    \begin{equation}
        \sum q_i\lambda_i^2\leq \lambda_1^2+\frac{1-\lambda_1}{\lambda_2}\lambda_2^2.
    \end{equation}
    This holds since 
    \begin{equation}
            \sum q_i\lambda_i^2 = \lambda_1^2+\sum_{i\geq 2} q_i\lambda_i^2 \leq \lambda_1^2+\lambda_2\sum_{i\geq 2} q_i\lambda_i = \lambda_1^2+\lambda_2(1-\lambda_1), 
    \end{equation}
    where we have used the fact that $\lambda_2\geq \lambda_i~(\forall i\geq2)$ and $\sum q_i\lambda_i=1$.
\end{proof}

\begin{lemma}\label{lem-compare}
    $\forall k\in\mathbb{N}$, $F^{(k)}(\Lambda)\succeq F^{(k)}(S(\Lambda))$. Here $F^{(k)}$ denotes iterating $F$ for $k$ times.
\end{lemma}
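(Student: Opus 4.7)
The plan is to induct on $k$, chaining together the three order-preservation facts already established: \cref{lem-Skeeporder} ($S$ preserves $\succeq$), \cref{lem-Fkeeporder} ($F$ preserves $\succeq$ when restricted to $\mathcal{S}$), and \cref{lemma1} (one application of $F$ dominates its composition with $S$). Transitivity of $\succeq$ does the rest.

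\textbf{Base case.} For $k=0$ the claim reads $\Lambda \succeq S(\Lambda)$, which is immediate from \cref{eq-def-S}: $S$ preserves the top two entries $\lambda_1$ and $\lambda_2$, hence preserves both $\lambda_1$ and $\Delta$. (The $k=1$ case is exactly \cref{lemma1}, but we do not need to treat it separately.)

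\textbf{Inductive step.} Assume $F^{(k)}(\Lambda) \succeq F^{(k)}(S(\Lambda))$. First observe that $S(\Lambda) \in \mathcal{S}$ (by definition of $\mathcal S$ and $S$), and by iterating \cref{lem-FkeepS} we get $F^{(k)}(S(\Lambda)) \in \mathcal{S}$; since $S$ is the identity on $\mathcal{S}$, this gives $S\bigl(F^{(k)}(S(\Lambda))\bigr) = F^{(k)}(S(\Lambda))$. Applying \cref{lem-Skeeporder} to the inductive hypothesis therefore yields
\begin{equation*}
S\bigl(F^{(k)}(\Lambda)\bigr) \;\succeq\; S\bigl(F^{(k)}(S(\Lambda))\bigr) \;=\; F^{(k)}(S(\Lambda)),
\end{equation*}
and both sides now live in $\mathcal{S}$. \cref{lem-Fkeeporder} then allows us to push $F$ through this inequality:
\begin{equation*}
F\bigl(S\bigl(F^{(k)}(\Lambda)\bigr)\bigr) \;\succeq\; F\bigl(F^{(k)}(S(\Lambda))\bigr) \;=\; F^{(k+1)}(S(\Lambda)).
\end{equation*}
Finally, \cref{lemma1} applied to the spectrum $F^{(k)}(\Lambda)$ gives
\begin{equation*}
F^{(k+1)}(\Lambda) \;=\; F\bigl(F^{(k)}(\Lambda)\bigr) \;\succeq\; F\bigl(S\bigl(F^{(k)}(\Lambda)\bigr)\bigr),
\end{equation*}
and concatenating the two chains via transitivity of $\succeq$ yields $F^{(k+1)}(\Lambda) \succeq F^{(k+1)}(S(\Lambda))$, closing the induction.

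\textbf{Anticipated difficulty.} All the heavy lifting was done in the preceding lemmas, so the only conceptual subtlety here is recognizing that the inductive hypothesis cannot be fed directly into \cref{lem-Fkeeporder} (which demands both arguments lie in $\mathcal{S}$, and $F^{(k)}(\Lambda)$ generally does not). The trick is to first project with $S$, use that $S$ preserves $\succeq$ and acts as the identity on $\mathcal{S}$, and only then invoke \cref{lem-Fkeeporder}; the leftover ``gap'' between $F^{(k)}(\Lambda)$ and $S(F^{(k)}(\Lambda))$ is absorbed by \cref{lemma1}. Once one sees this three-step shape, no calculation is required.
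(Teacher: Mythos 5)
Your proposal is correct and follows essentially the same route as the paper's proof: induction on $k$, projecting the inductive hypothesis into $\mathcal{S}$ via \cref{lem-Skeeporder} (using \cref{lem-FkeepS} to see $S$ acts as the identity on $F^{(k)}(S(\Lambda))$), pushing $F$ through with \cref{lem-Fkeeporder}, and closing the gap with \cref{lemma1} and transitivity. The only difference is cosmetic—you spell out the trivial $k=0$ base case rather than citing $k=1$ separately.
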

\begin{proof}
    We prove this lemma by induction on $k$.
    The case of $k=0$ is trivial. 
    The case of $k=1$ is \cref{lemma1}.
    Now assume the induction hypothesis holds for some $k \in \mathbb{N}$, namely, $F^{(k)}(\Lambda)\succeq F^{(k)}(S(\Lambda))$. Further applying \cref{lem-Skeeporder}, we have 
    \begin{equation}
        S(F^{(k)}(\Lambda))\succeq S(F^{(k)}(S(\Lambda)))=F^{(k)}(S(\Lambda)),
    \end{equation}
    where the last equality follows from \cref{lem-FkeepS}.
    Applying \cref{lem-Fkeeporder}, we have:
    \begin{equation}
        F(S(F^{(k)}(\Lambda)))\succeq  F(F^{(k)}(S(\Lambda)))=F^{(k+1)}(S(\Lambda)).
    \end{equation}
    On the other hand, due to \Cref{lemma1}, 
    \begin{equation}
        F^{(k+1)}(\Lambda)=F(F^{(k)}(\Lambda))\succeq F(S(F^{(k)}(\Lambda))),
    \end{equation}
    hence
    \begin{equation}
        F^{(k+1)}(\Lambda)\succeq F^{(k+1)}(S(\Lambda)).
    \end{equation}
    This completes the induction.
\end{proof}

\subsubsection{Upper Bound on Iteration Levels}

For a $d$-dimensional mixed state $\rho$, we associate a generalized spectrum as
\begin{equation}\label{eq-lambdaofrho}
    \Lambda(\rho)=((1,\cdots,1),(\lambda_1,\cdots,\lambda_d)),
\end{equation}
where $\lambda_i$ are the eigenvalues of $\rho$ in descending order.
Recall from \cref{eq:rout}
that on copies of $\rho$, the swap gadget returns the normalized state:
\begin{equation}
  \SWAP(\rho^{\otimes 2}) = \frac{\rho + \rho^2}{\Tr(\rho + \rho^2)}.
\end{equation}
Therefore,
\begin{equation}
    \Lambda(\SWAP(\rho^{\otimes 2}))=F(\Lambda(\rho)).
\end{equation}
For purification, we care about how $\lambda_1(F^{(k)}(\Lambda(\rho)))$ converges to 1 as $k$ increases. \Cref{lem-compare} says that $F^{(k)}(\Lambda)\succeq F^{(k)}(S(\Lambda))$, so $\lambda_1(F^{(k)} (\Lambda(\rho))  )$ is bounded below by $\lambda_1(F^{(k)}(S(\Lambda(\rho))))$, so it suffices to consider the trajectory starting from $S(\Lambda(\rho))$.
Operationally, this signifies that 
$\rho$ is easier to purify than a "depolarized state with generalized spectrum" having the same largest and second largest eigenvalues.

A generalized spectrum $\Lambda=(\bp,\blambda)\in\mathcal{S}$ is determined by two independent parameters. 
We make a change of variable to $d_{\eff}$ and $\delta$ by  
\begin{align}
    d_{\eff}&=1+\frac{1-\lambda_1}{\lambda_2}=\frac{1-\Delta}{\lambda_1-\Delta},\\
    \delta&=d_{\eff}\lambda_2=1-\Delta.
\end{align}
Note that if $\Lambda$ comes from a physical mixed state \cref{eq-lambdaofrho}, then
\begin{equation}
     2\leq d_\eff\leq d,
\end{equation}
where the first equality holds if and only if $\lambda_3=\cdots \lambda_d=0$, and the second equality holds if and only if $\rho$ takes the form of depolarized states.
Using these variables, we have 
\begin{equation}
    q_2=d_{\eff}-1,~~
    \lambda_1=1-\delta+\frac{\delta}{d_{\eff}},~~
    \lambda_2=\frac{\delta}{d_{\eff}}.
\end{equation}
We may think of $d_{\eff}$ as a (possibly non-integer) dimension, and $\delta$ as the noise parameter. 
In fact, when $d\in\mathbb{N}^+$, the spectra of $(1-\delta)\ket{\psi}\bra{\psi}+\frac{\delta}{d}I_d$ is exactly given by the above equations.
The function $F$, when restricted to $\mathcal{S}$, can be re-expressed as:
\begin{align}
    d_{\eff}(F(\Lambda))&=d_{\eff}(\Lambda),\\
    \delta(F(\Lambda))&=
    \frac{\delta + \delta^2/d_{\eff}}{2 - 2(1 - \frac{1}{d_{\eff}}) \delta + (1 - \frac{1}{d_{\eff}})\delta^2}.
\end{align}
This is exactly the same as \cref{eq:PD},
with $d$ there replaced by $d_{\eff}$.
Therefore, $F^{(k)}(S(\Lambda))$ undergoes the same dynamics as in the depolarizing noise case. 
Thanks to \Cref{lem-compare}, any upper bound for the level of iterations needed in the depolarizing noise case, if the proof of which does not require $d$ to be an integer, gives rise to an upper bound for the level of iterations needed for the general case.

\begin{theorem}
\label{thm:proof1_round_complexity}
    To purify a mixed state $\rho$ via recursive swap test until the final principal eigenvalue is larger than $1-\epsilon$, 
    it suffices if the algorithm recurses to the following number of levels:
    \begin{equation}\label{eq:Nupperbound}
        \begin{cases}
            \log_2\frac{1-\lambda_1}{(2\Delta-1)\epsilon}, & \text{if~} \Delta\geq \frac{2}{3} \,,\\
            \log_2\frac{1-\lambda_1}{(1-\Delta)\epsilon}+5, & \text{if~} \frac{1}{3}\leq \Delta<\frac{2}{3} \,,\\
            \log_2\frac{1-\lambda_1}{(1-\Delta)\epsilon}+5+
            f(\lambda_1,\Delta), & \text{if~} 0<\Delta<\frac{1}{3} \,.
        \end{cases}
    \end{equation}
    Here, $\lambda_1=\lambda_1(\rho)$, $\lambda_2=\lambda_2(\rho)$,  $\Delta=\lambda_1-\lambda_2$, and
    \begin{equation}\label{eq:defoff}
        f(\lambda_1,\Delta)=
        \min\left\{
        \frac{1}{\Delta}+2\nlog\frac{1}{\Delta},
        \of*{\frac{1-\lambda_1}{\lambda_2}+3}\nlog\left(1+\frac{\lambda_2}{\Delta(1-\Delta)}\right)+5\nlog\frac{1-\Delta}{\lambda_2}
        \right\}.
    \end{equation}
\end{theorem}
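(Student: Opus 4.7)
My plan is to leverage \Cref{lem-compare} to reduce the problem to analyzing the depolarized-spectrum trajectory $F^{(k)}(S(\Lambda(\rho)))$. Since $F^{(k)}(\Lambda(\rho)) \succeq F^{(k)}(S(\Lambda(\rho)))$, the first eigenvalue on the left dominates the first eigenvalue on the right, so it suffices to bound iteration counts for the reduced trajectory. Using the $(d_{\eff}, \delta)$ parametrization of $\mathcal{S}$ from the excerpt, and noting that $F$ preserves $d_{\eff}$ while $\delta$ obeys exactly the recurrence \cref{eq:PD} with $d \mapsto d_{\eff}$, the dynamics coincide with the depolarizing-noise recursion already analyzed in \cite{childs2025streaming}. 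The reduced trajectory starts at $d_{\eff} = 1 + (1-\lambda_1)/\lambda_2$ and $\delta = 1 - \Delta$, and the target condition $\lambda_1^{(k)} > 1 - \epsilon$ translates to a condition on $\delta_k$ via $1 - \lambda_1 = (1 - 1/d_{\eff})\delta$.

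The next step is to quote the iteration-count bounds from \cite{childs2025streaming} and verify that they extend to real $d_{\eff} \in [2, d]$; inspection of those proofs confirms that integrality of $d$ is never used. I would then split into three cases according to the size of $\delta = 1 - \Delta$. When $\Delta \geq 2/3$ (equivalently $\delta \leq 1/3$), the trajectory is already in the low-noise phase where the recurrence roughly halves $\delta$ at each step, directly yielding the $\log_2((1-\lambda_1)/((2\Delta-1)\epsilon))$ bound after translating via the $1 - \lambda_1 = (1-1/d_{\eff})\delta$ identity. When $1/3 \leq \Delta < 2/3$, a bounded number of initial rounds (absorbed into the additive constant $5$) brings $\delta$ down into this regime, after which the same logarithmic analysis applies. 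For $\Delta < 1/3$, one must additionally bound the high-noise ``approach'' phase, which contributes the $f(\lambda_1, \Delta)$ correction. The minimum in $f$ arises from two complementary estimates of this phase: one uniform in $\delta$ giving $1/\Delta + 2\ln(1/\Delta)$, and one that exploits the specific value of $d_{\eff}$ through the recurrence to yield the alternative $((1-\lambda_1)/\lambda_2 + 3)$-type bound.

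The principal obstacle will be the bookkeeping of constants: matching the factors $(2\Delta - 1)$ and $(1 - \Delta)$ in the denominators of \cref{eq:Nupperbound}, pinning down the additive $5$, and recovering the precise form of $f$ in \cref{eq:defoff}. The small-gap regime is the most delicate, since it requires verifying that both arguments of the minimum in $f$ correspond to genuine bounds derivable from the recurrence analysis of \cite{childs2025streaming} once transported into the generalized spectrum setting. Since all the hard analytical work (rate-of-convergence estimates for the depolarizing recurrence, phase-splitting arguments) has already been done there, the remaining task is essentially careful change of variables and case-by-case algebraic translation; I would not expect any new recurrence analysis to be required beyond a clean invocation of the existing results with $d \mapsto d_{\eff}$.
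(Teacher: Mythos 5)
Your proposal follows essentially the same route as the paper's proof: reduce to the generalized depolarized trajectory via \Cref{lem-compare}, change variables to $(d_{\eff},\delta)$ where $F$ preserves $d_{\eff}$ and $\delta$ obeys the recurrence \cref{eq:PD} with $d\mapsto d_{\eff}$, translate the target to a bound on $\delta^{(k)}$ via $1-\lambda_1=(1-1/d_{\eff})\delta$, and then invoke the three-case iteration bounds of \cite{childs2025streaming} after checking that their proofs never use integrality of $d$. This is exactly how the paper argues, so the proposal is correct and not a genuinely different approach.
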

    \noindent Note that $\frac{1-\lambda_1}{2\Delta-1}\leq 1$ when $\Delta\geq \frac{2}{3}$ and $\frac{1-\lambda_1}{1-\Delta}\leq 1$,
    hence the first logarithms in each line are always upper bounded by $\log_2\frac{1}{\epsilon}$.
\begin{proof}
    For a state $\rho$, we define $\Lambda(\rho)$ via \eqref{eq-lambdaofrho} and hence
    \begin{equation}\label{eq-initialSLambda}
    \begin{aligned}
        S(\Lambda(\rho))&=((1,\frac{1-\lambda_1}{\lambda_1-\Delta}), (\lambda_1,\lambda_1-\Delta)),\\
        d_{\eff}(S(\Lambda(\rho)))&=\frac{1-\Delta}{\lambda_1-\Delta},\\
        \delta(S(\Lambda(\rho)))&=d_{\eff}\lambda_2=1-\Delta.
    \end{aligned}
    \end{equation}
    To purify the state so that $\lambda_1(F^{(k)}(\Lambda(\rho)))\geq1-\epsilon$, due to \Cref{lem-compare}, it is enough to demand
    \begin{equation}
        \lambda_1(F^{(k)}(S(\Lambda(\rho))))\geq1-\epsilon,
    \end{equation}
    or equivalently
    \begin{equation}\label{eq:finaldelta}
        \delta^{(k)}\leq\frac{d_\eff^{(k)}\epsilon}{d_\eff^{(k)}-1}.
    \end{equation}
    Here, $\delta^{(k)}=\delta(F^{(k)}(S(\Lambda(\rho))))$ and $d_\eff^{(k)}$ is defined similarly.
    From the discussions above, we have $d_\eff^{(k)}=d_\eff^{(0)}$, which is given by \cref{eq-initialSLambda}. 
    Hence the above equation becomes
    \begin{equation}
        \delta^{(k)}\leq\frac{(1-\Delta)\epsilon}{1-\lambda_1}.
    \end{equation}
    
    The theorem is then proved by following the discussions in \cite{childs2025streaming} (Section 3.2), since the proofs therein do not require $d$ to be an integer\footnote{Except claim $2^{(d)}$, which is required by lemma 8. But it can be easily modified to allow arbitrary $d\in [2,\infty)$.}.
    More precisely: 
    
    If $\delta^{(0)}\leq \frac{1}{3}$ ($\Delta\leq \frac{2}{3}$), then eq(31) in \cite{childs2025streaming} shows that 
    $k=\log_2\frac{\delta^{(0)}}{(1-2\delta^{(0)})\delta^{(k)}}=\log_2\frac{1-\lambda_1}{(2\Delta-1)\epsilon}$ suffices.

    If $\frac{1}{3}<\delta^{(0)}\leq \frac{2}{3}$ ($\frac{1}{3}\leq \Delta<\frac{2}{3}$), then \cite[Section 3.2.2]{childs2025streaming} shows that $\log_2\frac{1}{\delta^{(k)}}+5$ iterations suffice.

    If $\delta^{(0)}>\frac{2}{3}$ ($\Delta<\frac{1}{3}$), then Lemma 7 in \cite{childs2025streaming} shows that $\log_2\frac{1}{\delta^{(k)}}+\frac{1}{1-\delta^{(0)}}+2\nlog \frac{1}{1-\delta^{(0)}}$ iterations suffice.
    Lemma 8 in \cite{childs2025streaming} shows that 
    $\log_2\frac{1}{\delta^{(k)}}+(\nlog(1+\frac{1}{d_\eff+1}))^{-1}\nlog\left(1+\frac{1}{d_\eff(1-\delta^{(0)})}\right)+ 5\nlog d_\eff$ also suffices.

\end{proof}

\subsubsection{Sample Complexity Upper Bound}

In this part, we analysis the sample complexity of our algorithm. Our result is as follows.
\begin{theorem}
\label{thm:proof1_sample_complexity}
    To purify a mixed state $\rho$ via recursive swap test until the final principal eigenvalue is larger than $1-\epsilon$, the expected sample complexity is at most
    \begin{equation}\label{eq:Nupperboundthm1}
        \begin{cases}
            \frac{9(1-\lambda_1)}{\epsilon}, & \text{if~} \Delta\geq \frac{2}{3} \,,\\[1ex]
            \frac{\Theta(1)}{\epsilon} \, 4^{\,f(\lambda_1,\,\Delta)}, & \text{if~} 0<\Delta<\frac{2}{3} \,.
        \end{cases}
    \end{equation}
Here $\Theta(1)$ is an absolute number; $f(\lambda_1,\Delta)$ is given by \cref{eq:defoff}.
\end{theorem}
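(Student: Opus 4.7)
The plan is to express the expected sample complexity by the recurrence $T_k = 2 T_{k-1}/p_k$ with $T_0 = 1$, where $p_k = v(\vlambda^{(k-1)}) = (1 + \|\vlambda^{(k-1)}\|^2)/2$ is the success probability of a single swap test at level $k$ on the spectrum $\vlambda^{(k-1)} = F^{(k-1)}(\vlambda(\rho))$. This gives $T_n = 2^n/\prod_{k=1}^n p_k$, and the total number of levels $n$ is bounded by \Cref{thm:proof1_round_complexity}, so the remaining work is to lower bound $\prod_{k=1}^n p_k$.

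The central step is an elementary telescoping identity. Using $F(\vlambda)_1 = \lambda_1(1+\lambda_1)/(1+\|\vlambda\|^2)$ and direct algebra, I would verify that
\[
2 p_{k+1} \cdot (1 - \lambda_1^{(k+1)}) \;=\; (1 - \lambda_1^{(k)}) + \sum_{i \geq 2} (\lambda_i^{(k)})^2 \;\geq\; 1 - \lambda_1^{(k)},
\]
which rearranges to $1/p_{k+1} \leq 2(1 - \lambda_1^{(k+1)})/(1 - \lambda_1^{(k)})$. Telescoping across $k = 1, \ldots, n$ and using the termination guarantee $1 - \lambda_1^{(n)} \leq \epsilon$ yields
\[
\prod_{k=1}^n \frac{1}{p_k} \;\leq\; 2^n \cdot \frac{1 - \lambda_1^{(n)}}{1 - \lambda_1^{(0)}} \;\leq\; 2^n \cdot \frac{\epsilon}{1 - \lambda_1},
\]
and hence $T_n \leq 4^n\, \epsilon/(1 - \lambda_1)$.

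For the easy regime $\Delta \geq 2/3$, substituting the round bound $n \leq \log_2((1 - \lambda_1)/((2\Delta - 1)\epsilon))$ from \Cref{thm:proof1_round_complexity} gives
\[
T_n \;\leq\; \frac{(1 - \lambda_1)^2}{(2\Delta - 1)^2 \epsilon^2} \cdot \frac{\epsilon}{1 - \lambda_1} \;=\; \frac{1 - \lambda_1}{(2\Delta - 1)^2 \, \epsilon} \;\leq\; \frac{9(1 - \lambda_1)}{\epsilon},
\]
using $2\Delta - 1 \geq 1/3$. For the harder regime $0 < \Delta < 2/3$, substituting the corresponding bound $n \leq \log_2((1-\lambda_1)/((1-\Delta)\epsilon)) + 5 + f(\lambda_1, \Delta)$ (where in the medium sub-regime $1/3 \leq \Delta < 2/3$ the $f$ term is itself bounded by a constant and is absorbed into the additive $5$) produces
\[
T_n \;\leq\; \frac{(1 - \lambda_1) \cdot 4^{5 + f(\lambda_1, \Delta)}}{(1-\Delta)^2\, \epsilon} \;\leq\; \Theta(1) \cdot \frac{4^{f(\lambda_1, \Delta)}}{\epsilon},
\]
where the last step uses $1-\lambda_1 \leq 1$ and $(1-\Delta)^2 \geq 1/9$ to fold all numerics into the absolute constant.

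The only real work is verifying the telescoping identity, which is a short algebraic calculation from the definitions of $F$ and $v$. A pleasant feature of this approach is that the sample-complexity analysis does not require the reduction machinery of \Cref{lem-compare} beyond its earlier use in deriving the round bound, and the sharp explicit constant $9$ in the easy case comes out automatically from $(2\Delta - 1)^{-2} \leq 9$.
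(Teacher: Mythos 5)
Your proposal is correct, and it reaches the theorem by a genuinely different route in the key step. Where the paper lower-bounds $\prod_k p_k$ by importing the depolarizing-case estimate from \cite{childs2025streaming} (via $p_k \geq \lambda_1^{(k)} \geq 1-\delta^{(k)}$ and their eqs.~(94)--(96), plus a head/tail split of the product when $\Delta < 2/3$), you prove a self-contained identity, $2p_{k+1}\bigl(1-\lambda_1^{(k+1)}\bigr) = \bigl(1-\lambda_1^{(k)}\bigr) + \sum_{i\geq 2}\bigl(\lambda_i^{(k)}\bigr)^2 \geq 1-\lambda_1^{(k)}$, which checks out directly from \cref{eq:evals} and \cref{eq:prob}, and telescope it to the uniform bound $T_n \leq 4^n\,\epsilon/(1-\lambda_1)$ valid in every regime (using that \cref{thm:proof1_round_complexity} guarantees $1-\lambda_1^{(n)}\leq\epsilon$ at the prescribed $n$). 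Plugging in the round bounds then reproduces exactly the paper's $\frac{1-\lambda_1}{(2\Delta-1)^2\epsilon}\leq \frac{9(1-\lambda_1)}{\epsilon}$ for $\Delta\geq 2/3$ and a $\Theta(1)\,4^{f}/\epsilon$ bound for $\Delta<2/3$ (your observation that $f\geq 0$, so $4^f\geq 1$, covers the sub-regime $\tfrac13\leq\Delta<\tfrac23$ where the round bound carries no $f$ term). What your route buys is independence from the probability-product estimates of the prior work and a single argument covering both cases, at the mild cost of a slightly weaker constant in the second case (you drop the $(1-\lambda_1)$ factor that the paper's \cref{eq:Cupbound2} retains, which is immaterial to the stated theorem); like the paper, you still rely on \cref{thm:proof1_round_complexity}, hence on the reduction machinery of \cref{lem-compare}, for the level count, and on the implicit assumption $\lambda_1^{(0)}<1$ so the telescoping ratios are well defined. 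Both you and the paper also gloss over the integrality of $n$, so this is not a new gap.
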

\begin{proof}
We follow the same strategy as sec 3.3 in \cite{childs2025streaming}.
Lemma 10 therein still holds for general mixed states, which we quote here (and adapt it to our notation):
\begin{lemma}
    The expected sample complexity is
    \begin{equation}
        C(n)=\frac{2^n}{\prod_{k=0}^{n-1} p_k}.
    \end{equation}
\end{lemma}
\noindent Here, $n$ is the iteration level of the algorithm; $p_k$ is the probability for the SWAP test to have outcome $a=0$ at level $k$ 
(as described in \cref{sec:recursive}).  In particular, 
\begin{equation}
     p_k =\frac{1}{2}(1+\Tr(\rho_k^2)),
\end{equation}
where $\rho_k$ is the input state for level $k$.

\noindent$\bullet$ \textbf{Case 1:} $\Delta \geq 2/3$.
We have
\begin{equation}
    p_k \geq \frac{1+(\lambda_1^{(k)})^2}{2}\geq\lambda_1^{(k)}
    \geq \Delta^{(k)}
    = 1-\delta^{(k)}.
\end{equation}
Using eqs.~(94)-(96) in \cite{childs2025streaming}, we get:
\begin{equation}
    \prod_{k=0}^{n-1} p_k \geq \prod_{k=0}^{n-1} \lambda_1^{(k)}
    > 1-2\delta^{(0)}.
\end{equation}
Therefore,
\begin{equation}\label{eq:Cupbound1}
    C(n) =\frac{2^n}{\prod_{k=0}^{n-1} p_k}\leq \frac{2^n}{1-2\delta^{(0)}}
    =\frac{2^n}{2\Delta-1} =\frac{1-\lambda_1}{(2\Delta-1)^2\epsilon}.
\end{equation}
Here, $n$ is given by the first case of \cref{eq:Nupperbound}.
In particular, since $\Delta\geq \frac{2}{3}$, we have
\begin{equation}\label{eq:Cupbound11}
    C(n) \leq \frac{9(1-\lambda_1)}{\epsilon}.
\end{equation}

\noindent$\bullet$ \textbf{Case 2:} $\Delta<2/3$.
Since $p_k \geq \frac{1}{2}$, we replace $p_k$ by $\frac{1}{2}$ for all $k\geq n^*$ and get:
\begin{equation}
    C(n) \leq \frac{2^{n^*}}{\prod_{k=n-n*}^{n-1} p_k }\cdot 4^{n-n^*},
\end{equation}
where $n^*$ is the number of iterations needed to bring $\delta$ down to $\delta^{(n)}$ (\cref{eq:finaldelta}) from $\frac{1}{3}$, which is given by the logarithms in the second and third case in \cref{eq:Nupperbound}. 
The first factor in the above inequality follows the same estimation as in case 1, hence is upper bounded by \cref{eq:Cupbound11}.
Therefore, in this case,
\begin{equation}\label{eq:Cupbound2}
    C(n)\leq \frac{9(1-\lambda_1)}{\epsilon} \, 4^{\,5+f(\lambda_1,\,\Delta)}.
\end{equation}
\end{proof}

\subsection{Proof 2 -- New direct proof}\label{sec:proof2}
In our second proof, we present a new and more direct analysis of the same algorithm. 
In our analysis, we divide the iterations into three stages according to the progression of intermediate states' purity.
Starting from a highly noisy initial state, where the principal and second eigenvalues may be quite close, the algorithm progresses through the following three stages:
\begin{itemize}
    \item First stage: in a worst-case state, the gap $\Delta := \lambda_1 - \lambda_2$ can be arbitrarily small. Initially, progress is made by increasing $\Delta$ while there is little change in $\lambda_1$. For technical reason, progress is measured by a sum of eigenvalue ratios, and the first stage ends when  
    \begin{equation}
        \sum_{j=2}^{d} \frac{\lambda_j}{\lambda_1 - \lambda_j} \leq \frac{1}{\lambda_1^{(0)}},
    \end{equation}
    where $\lambda_1^{(0)}$ is the initial value of $\lambda_1$.\footnote{In principle we could use the contemporaneous value of $\lambda_1$ here. It makes no significant difference for worst-case (depolarized) states, but we use the initial value for simplicity of analysis later on.} 
    \item Second stage: at this point, the principal eigenvalue ($\lambda_1$) begins to grow and different ideas are required since $\lambda_1$ is changing. Surprisingly, the same sum can be used to measure progress, and the second stage ends when
    \begin{equation}
        \sum_{j=2}^{d} \frac{\lambda_j}{\lambda_1 - \lambda_j} \leq \frac{1}{2}.
    \end{equation}
    \item Third stage: the algorithm continues until the desired purification is achieved, i.e., $1-\lambda_1 \leq \epsilon$. We switch to working with $\eta := 1 - \lambda_1$ as a measure of progress.
\end{itemize}

We emphasize that the algorithm itself remains unchanged across all three stages.
The division into stages is solely for analytical convenience, enabling the use of different techniques and progress measures at each step.
Moreover, if the initial state is already somewhat pure, certain stages may be skipped.

\subsubsection{First Stage}

For the first two stages, we introduce the quantity 
\begin{equation}
\gamma_j := \gamma_j(\vlambda) = \frac{\lambda_j}{\lambda_1 - \lambda_j},
\end{equation}
which we most often use in a summation:
\begin{equation}
\label{eq:def_e}
E(\vlambda) := \sum_{j=2}^{d} \gamma_j(\vlambda) = \sum_{j=2}^{d} \frac{\lambda_j}{\lambda_1 - \lambda_j}.
\end{equation}
The quantity $E$ serves as an important measure of progress in the algorithm: it can be arbitrarily large initially and we will show that it decreases (see, e.g., \cref{cor:e_ratio_by_lambda} and \cref{lem:e_ratio}) from one level of the recursive swap test algorithm to the next. Recall that stage 1 runs until $E(\vlambda) \leq \tfrac{1}{\lambda_1}$. We will show that $E$ is bounded between
\begin{equation}
    \frac{1}{\lambda_1} - 1 \leq E(\vlambda) \leq \frac{1-\lambda_1}{\Delta}.
\end{equation}
The intuition for the threshold between stages is that we can bound the progress in $E$ as a function of $\frac{1}{\lambda_1}$, or use $\frac{1}{\lambda_1} \leq E + 1$. Towards the upper end of this interval, it is better to use $\frac{1}{\lambda_1}$ (\cref{cor:e_ratio_by_lambda}), but eventually we need to use $E$ (\cref{lem:e_ratio}) because $E$ itself is changing.

We begin with a lemma showing that $\gamma_j$ decreases geometrically, by a factor that is some function of $\lambda_1$. 
\begin{lemma}
	\label{lem:gamma_ratio}
	For all $\vlambda \in V$,
	\begin{equation}
	\frac{\gamma_j(\vlambda')}{\gamma_j(\vlambda)} = \frac{1 + \lambda_j}{1 + \lambda_1 + \lambda_j} \leq \exp\of*{-\frac{1}{1/\lambda_1 + \tfrac{3}{2}}},
	\end{equation}
	where $\vlambda' = F(\vlambda)$. 
\end{lemma}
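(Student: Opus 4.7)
The equality in the lemma should fall straight out of the definitions. Writing $\lambda'_i = F(\vlambda)_i = \lambda_i(1+\lambda_i)/(1+\norm{\vlambda}^2)$, the common factor $1/(1+\norm{\vlambda}^2)$ appears in both $\lambda'_j$ and $\lambda'_1 - \lambda'_j$ and cancels in $\gamma_j(\vlambda') = \lambda'_j/(\lambda'_1 - \lambda'_j)$. The only nontrivial step is the factorization
\begin{equation*}
\lambda_1(1+\lambda_1) - \lambda_j(1+\lambda_j) = (\lambda_1 - \lambda_j)\,(1 + \lambda_1 + \lambda_j),
\end{equation*}
after which a further cancellation of $\lambda_1 - \lambda_j > 0$ yields $\gamma_j(\vlambda')/\gamma_j(\vlambda) = (1+\lambda_j)/(1+\lambda_1+\lambda_j)$.

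For the inequality, I would take logarithms and aim to prove
\begin{equation*}
\ln\frac{1+\lambda_j}{1+\lambda_1+\lambda_j} = -\ln\of*{1 + \frac{\lambda_1}{1+\lambda_j}} \leq -\frac{1}{1/\lambda_1 + 3/2}.
\end{equation*}
The naive bound $1-x \leq e^{-x}$ applied to $1 - \lambda_1/(1+\lambda_1+\lambda_j)$ produces exponent $-\lambda_1/(1+\lambda_1+\lambda_j)$, which after using $\lambda_j \leq \lambda_1$ is only $-1/(1/\lambda_1 + 2)$, not tight enough. Instead I would use the sharper Pad\'e-type estimate $\ln(1+y) \geq 2y/(2+y)$ for $y \geq 0$; this is easily verified by checking that $(1+y/2)\ln(1+y) - y$ and its first derivative vanish at $y=0$ while its second derivative $y/(2(1+y)^2)$ is non-negative for $y \geq 0$.

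Applying this estimate with $y = \lambda_1/(1+\lambda_j)$ gives the bound $-\lambda_1/(1 + \lambda_j + \lambda_1/2)$ on the logarithm. The final step is to use $\lambda_j \leq \lambda_1$, valid because $j \geq 2$ and the spectrum is non-increasing, to obtain $1 + \lambda_j + \lambda_1/2 \leq 1 + \tfrac{3}{2}\lambda_1 = \lambda_1(1/\lambda_1 + 3/2)$. Exponentiating then yields the claimed bound.

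\textbf{Main obstacle.} The delicate point is picking the right sub-linear refinement of $\ln(1+y)$: a weaker estimate (e.g. the quadratic $\ln(1+y) \geq y - y^2/2$, or $1-x \leq e^{-x}$ applied directly) does not recover the constant $3/2$, because when $\lambda_2$ is close to $\lambda_1$ the naive exponent degrades to $-1/(1/\lambda_1 + 2)$. The $1 + y/2$ in the denominator of the Pad\'e bound is precisely what combines with $\lambda_j \leq \lambda_1$ to produce the $\lambda_1/2$ correction and hence the constant $3/2$ in the final expression.
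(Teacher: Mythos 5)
Your proposal is correct and follows essentially the same route as the paper: the same cancellation and factorization give the exact ratio $(1+\lambda_j)/(1+\lambda_1+\lambda_j)$, and the paper likewise invokes the bound $\tfrac{2x}{2+x}\leq \ln(1+x)$ with $x=\lambda_1/(1+\lambda_j)$ followed by $\lambda_j\leq\lambda_1$ to obtain the constant $\tfrac32$. Your verification of that Pad\'e-type inequality via derivatives is a harmless addition to what the paper simply cites as a standard fact.
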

\begin{proof}
	Recall from \cref{eq:evals} that for all $j$, 
	\begin{equation}
	\lambda_j' = \frac{\lambda_j(\lambda_j + 1)}{1 + \norm{\vlambda}^2}.
	\end{equation}
	In $\gamma_j(\vlambda')$, the $1 + \norm{\vlambda}^2$ denominators cancel out and we have 
	\begin{equation}
	\gamma_j(\vlambda') = \frac{\lambda_j + \lambda_j^2}{\lambda_1 + \lambda_1^2 - \lambda_j - \lambda_j^2} = \frac{\lambda_j}{\lambda_1 - \lambda_j} \frac{1 + \lambda_j}{1 + \lambda_1 + \lambda_j} = \gamma_j(\vlambda) \frac{1 + \lambda_j}{1 + \lambda_1 + \lambda_j}.
	\end{equation}
	A standard inequality on logarithms says $\frac{2}{2+x} \leq \frac{\nlog(1+x)}{x}$ for all $x > -1$. In particular, we have $\frac{2x}{2+x} \leq \nlog(1+x)$ for all $x > 0$. Substituting $x = \frac{\lambda_1}{1 + \lambda_j}$ gives 
	\begin{equation}
	\frac{\lambda_1}{1 + \lambda_j + \frac{1}{2}\lambda_1} = \frac{2x}{2+x} \leq \nlog(1 + x) = \nlog\of*{\frac{1 + \lambda_j + \lambda_1}{1 + \lambda_j}}.
	\end{equation}
	We rearrange this into
	\begin{equation}
	\frac{\gamma_j(\vlambda')}{\gamma_j(\vlambda)} = \frac{1 + \lambda_j}{1 + \lambda_1 + \lambda_j} \leq \exp\of*{-\frac{\lambda_1}{1 +  \lambda_j + \tfrac{1}{2} \lambda_1}} \leq \exp\of*{-\frac{1}{1/\lambda_1 + \tfrac{3}{2}}},
	\end{equation}
	where the rightmost inequality comes from setting $\lambda_j$ to its upper bound, $\lambda_1$. This completes the proof.
\end{proof}

While it is not strictly necessary for the proof, we also can bound $\gamma_j(\vlambda)$ on the other side by a similar function of $\lambda_1$. 
\begin{lemma}
	For all $\vlambda \in V$,
	\begin{equation}
	\exp\of*{-\lambda_1} \leq \frac{\gamma_j(\vlambda')}{\gamma_j(\vlambda)},
	\end{equation}
	where $\vlambda' = F(\vlambda)$. 
\end{lemma}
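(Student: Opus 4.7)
The plan is to reuse the exact formula for the ratio obtained in the preceding lemma, namely
\[
\frac{\gamma_j(\vlambda')}{\gamma_j(\vlambda)} = \frac{1 + \lambda_j}{1 + \lambda_1 + \lambda_j},
\]
and then show that this ratio is at least $\exp(-\lambda_1)$. The bound to establish is then a purely one-variable inequality, and the proof of the previous lemma already did most of the conceptual work (deriving the closed form from \cref{eq:evals}), so I would not rederive it.

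Taking logarithms, the claim is equivalent to
\[
\ln\of*{1 + \frac{\lambda_1}{1 + \lambda_j}} \leq \lambda_1.
\]
This I would attack with the standard inequality $\ln(1 + x) \leq x$, valid for all $x > -1$. Substituting $x = \lambda_1/(1 + \lambda_j)$ (which is nonnegative since $\lambda_1, \lambda_j \geq 0$) gives
\[
\ln\of*{1 + \frac{\lambda_1}{1 + \lambda_j}} \leq \frac{\lambda_1}{1 + \lambda_j}.
\]
Finally, because $\lambda_j \geq 0$ implies $1 + \lambda_j \geq 1$, the right-hand side is at most $\lambda_1$, which completes the chain.

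I do not expect any serious obstacle: the lemma is a two-line consequence of the closed-form expression plus $\ln(1+x) \leq x$. The only thing worth double-checking is that the bound uses $\lambda_j \geq 0$ (guaranteed by $\vlambda \in V$) rather than any stronger assumption, so that the statement genuinely holds for every index $j \geq 2$ and for every valid spectrum. One could of course tighten the conclusion slightly to $\exp(-\lambda_1/(1+\lambda_j))$, but the stated form $\exp(-\lambda_1)$ is cleaner and suffices for comparison with the upper bound from the previous lemma.
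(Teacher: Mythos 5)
Your proof is correct and is essentially identical to the paper's: both take the closed-form ratio $\frac{1+\lambda_j}{1+\lambda_1+\lambda_j}$ from the preceding lemma, apply $\ln(1+x)\leq x$ with $x=\lambda_1/(1+\lambda_j)$, and then use $1+\lambda_j\geq 1$ to conclude. No gaps.
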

\begin{proof}
We have
    \begin{equation}
        \ln\frac{\gamma_j(\vlambda)}{\gamma_j(\vlambda')}
        =\ln\left( \frac{1 + \lambda_1 + \lambda_j}{1 + \lambda_j}   \right)
        < \frac{\lambda_1}{1+\lambda_j}
        \leq \lambda_1,
    \end{equation}
where in the first inequality we apply the basic inequality $\ln(1+x) \leq x$ for $x=\frac{\lambda_1}{1+\lambda_j}$.
The lemma is proved by exponentiating both sides and then taking their reciprocals.
\end{proof}

In other words, $\gamma_j(\vlambda)$ decreases by a factor of approximately $\exp(-\lambda_1)$ for all $j$. Since $\lambda_1$ is at least $\frac{1}{d}$, we can use this to lower bound the progress we make even when the gap is tiny. Since the same bounds hold for $\gamma_j(\vlambda') / \gamma_j(\vlambda)$ for all $j$, we can get a similar bound on $E(\vlambda') / E(\vlambda)$ quite easily.
\begin{corollary}
	\label{cor:e_ratio_by_lambda}
	For all $\vlambda \in V$,
	\begin{equation}
	\frac{E(\vlambda')}{E(\vlambda)} \leq \exp\of*{-\frac{1}{1/\lambda_1 + \tfrac{3}{2}}},
	\end{equation}
	where $\vlambda' = F(\vlambda)$. 
\end{corollary}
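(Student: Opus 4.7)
The plan is to derive this corollary directly from Lemma~\ref{lem:gamma_ratio} by a term-by-term argument. The crucial observation is that the upper bound established in Lemma~\ref{lem:gamma_ratio} on $\gamma_j(\vlambda')/\gamma_j(\vlambda)$, namely $\exp\of*{-\frac{1}{1/\lambda_1 + 3/2}}$, is \emph{independent of $j$}: it was obtained by setting $\lambda_j$ to its largest possible value $\lambda_1$, and hence it applies uniformly to every index $j \in \{2, \ldots, d\}$. Once a uniform per-term contraction factor is available, the analogous bound on the sum $E(\vlambda) = \sum_{j=2}^{d} \gamma_j(\vlambda)$ is immediate.

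Concretely, I would first abbreviate $c := \exp\of*{-\frac{1}{1/\lambda_1 + 3/2}}$ and invoke Lemma~\ref{lem:gamma_ratio} to write $\gamma_j(\vlambda') \leq c \, \gamma_j(\vlambda)$ for each $j \geq 2$. Since every $\gamma_j(\vlambda)$ is nonnegative (because $\lambda_j \leq \lambda_1$ with equality only for $j = 1$, and by definition of $V$ we have $\lambda_j \geq 0$), summing these inequalities over $j = 2, \ldots, d$ gives
\begin{equation}
E(\vlambda') \;=\; \sum_{j=2}^{d} \gamma_j(\vlambda') \;\leq\; c \sum_{j=2}^{d} \gamma_j(\vlambda) \;=\; c \, E(\vlambda).
\end{equation}

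To finish, I would divide both sides by $E(\vlambda)$, which is legitimate because $\vlambda \in V$ requires $\lambda_1 > \lambda_2$, so $\gamma_2(\vlambda) > 0$ and hence $E(\vlambda) > 0$. This yields the claimed inequality $E(\vlambda')/E(\vlambda) \leq c$. There is essentially no obstacle here — the entire content of the corollary is packed into Lemma~\ref{lem:gamma_ratio}; the corollary is just the observation that a uniform contraction bound on each summand passes immediately to a contraction bound on any nonnegative linear combination, in particular the sum $E$.
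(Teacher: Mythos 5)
Your proposal is correct and matches the paper's proof in substance: the paper bounds the ratio of sums by the maximum of the per-term ratios $\gamma_j(\vlambda')/\gamma_j(\vlambda)$ and then applies Lemma~\ref{lem:gamma_ratio}, which is the same uniform-contraction argument you give term by term. Your added remarks on nonnegativity of $\gamma_j$ and positivity of $E(\vlambda)$ are fine but not a departure from the paper's route.
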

\begin{proof}
	\begin{align}
		\frac{E(\vlambda')}{E(\vlambda)} &= \frac{\sum_{j=2}^{d} \gamma_j(\vlambda')}{\sum_{j=2}^{d} \gamma_j(\vlambda)} && \text{by \eqref{eq:def_e},} \\
		&\leq \max_{j} \frac{\gamma_j(\vlambda')}{\gamma_j(\vlambda)} \\
		&\leq \exp\of*{-\frac{1}{1/\lambda_1 + \tfrac{3}{2}}} && \text{by \cref{lem:gamma_ratio}}. \qedhere
	\end{align}
\end{proof}

\begin{lemma}
    \label{lem:stage1_length}
	For any initial state $\vlambda \in V$, we have $E(\vlambda) \leq \tfrac{1}{\lambda_1}$ already or achieve $E(F^{(l_1)}(\vlambda)) \leq \frac{1}{\lambda_1}$ within $l_1 \leq \lceil (1/\lambda_1 + \tfrac{3}{2}) \nlog\of*{E(\vlambda) \lambda_1} \rceil$ levels\footnote{Note that this is the initial value of $\lambda_1$, \emph{not} $F^{(l_1)}(\vlambda)_1$, the value at step $l_1$.}.
\end{lemma}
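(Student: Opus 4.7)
My plan is to derive the bound by iterating the per-step decay estimate from \cref{cor:e_ratio_by_lambda}, exploiting the fact that the principal eigenvalue $\lambda_1$ can only increase under $F$, so the initial value $\lambda_1 = \lambda_1^{(0)}$ yields a uniform worst-case decay rate valid for every iteration.

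First, I would dispose of the trivial case: if $E(\vlambda) \leq 1/\lambda_1$ already, then $l_1 = 0$ works (and the stated bound is vacuous, since $\nlog(E(\vlambda)\lambda_1) \leq 0$). So from here on, I assume $E(\vlambda) > 1/\lambda_1$.

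Next, writing $\vlambda^{(k)} := F^{(k)}(\vlambda)$ and $\lambda_1^{(k)} := \vlambda^{(k)}_1$, I would combine \cref{cor:e_ratio_by_lambda} with the monotonicity of the leading eigenvalue (observed in \cref{eq:purer} and restated in the proposition at the end of \cref{sec:preliminaries}, namely $\lambda_1^{(k+1)} > \lambda_1^{(k)}$). Since $\lambda_1^{(k)} \geq \lambda_1$ for all $k$, we obtain
\begin{equation*}
\frac{E(\vlambda^{(k+1)})}{E(\vlambda^{(k)})} \leq \exp\of*{-\frac{1}{1/\lambda_1^{(k)}+3/2}} \leq \exp\of*{-\frac{1}{1/\lambda_1+3/2}}.
\end{equation*}
Telescoping this inequality over $k = 0, 1, \ldots, l-1$ yields $E(\vlambda^{(l)}) \leq E(\vlambda) \, \exp(-l / (1/\lambda_1 + 3/2))$. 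To guarantee $E(\vlambda^{(l)}) \leq 1/\lambda_1$, it then suffices to take $l \geq (1/\lambda_1 + 3/2) \, \nlog(E(\vlambda)\lambda_1)$, which is exactly the ceiling bound claimed.

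The main subtlety (rather than obstacle) is handling the fact that $\lambda_1$ is not constant across iterations: the per-step decay factor in \cref{cor:e_ratio_by_lambda} depends on the \emph{current} $\lambda_1^{(k)}$ rather than the initial value. However, monotonicity of $\lambda_1$ under $F$ lets us uniformly replace $\lambda_1^{(k)}$ by the smaller $\lambda_1^{(0)}$, which both produces the clean closed-form bound stated and explains the footnote emphasizing that $\lambda_1$ denotes the initial value. Everything else is routine algebraic manipulation of the geometric-decay-to-a-threshold calculation.
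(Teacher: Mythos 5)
Your proposal is correct and follows essentially the same route as the paper: telescoping the per-step bound of \cref{cor:e_ratio_by_lambda}, using the monotonicity of the principal eigenvalue under $F$ to replace $\lambda_1^{(k)}$ by the initial $\lambda_1$, and then solving the resulting geometric decay for $t$. The only cosmetic difference is that you explicitly dispose of the trivial case $E(\vlambda) \leq 1/\lambda_1$, which the lemma statement already accommodates.
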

\begin{proof}
	Take $E(F^{(t)}(\vlambda))/E(\vlambda)$, telescope it, and bound each term with \cref{cor:e_ratio_by_lambda}. 	
	\begin{equation}
		\frac{E(F^{(t)}(\vlambda))}{E(\vlambda)} = \prod_{k=0}^{t-1}\frac{E(F^{(k+1)}(\vlambda))}{E(F^{(k)}(\vlambda))} \leq \prod_{k=0}^{t-1} \exp\of*{-\frac{1}{1/(F^{(k)}(\vlambda))_1 + \tfrac{3}{2}}}
	\end{equation}
	Observe that $\exp(-\tfrac{1}{1/x + 3/2})$ is decreasing as a function of $x$ and since $F^{(k)}(\vlambda)_1$ is always at least $\lambda_1$, it follows that
	\begin{equation}
	\frac{E(F^{(t)}(\vlambda))}{E(\vlambda)} \leq \exp\of*{-\frac{t}{1/\lambda_1 + \frac{3}{2}}}.
	\end{equation}
	At $t = \ceiling*{(1/\lambda_1 + \tfrac{3}{2})\nlog\of*{E(\vlambda) \lambda_1}}$ we have 
	\begin{equation}
	E(F^{(l_1)}(\vlambda)) \leq E(\vlambda) \exp\of*{-\frac{\ceiling*{(1/\lambda_1 + \tfrac{3}{2})\nlog\of*{E(\vlambda) \lambda_1}}}{1/\lambda_1 + \tfrac{3}{2}}} \leq \frac{1}{\lambda_1}, 
	\end{equation}
    and since $l_1$ is the first index where this happens, we have our desired bound on $l_1$.
\end{proof}

\subsubsection{Second Stage}

The second stage begins when $E \leq \frac{1}{\lambda_1}$. We will show in a moment that $\frac{1}{\lambda_1} - 1 \leq E$, so $E$ and $\tfrac{1}{\lambda_1}$ are approximately the same. Unfortunately, this means that progress in $E$ changes the rate of progress (tied to $\tfrac{1}{\lambda_1}$). In this section we introduce new bounds to handle the dynamic nature of this situation.

\begin{proposition}
	\label{prop:e_bounds_lambda1}
	\( \frac{1}{\lambda_1} \leq 1 + E(\vlambda) \) for all $\vlambda \in V$.	
\end{proposition}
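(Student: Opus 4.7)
The plan is to rearrange the desired inequality into an equivalent form and then prove it by a simple term-by-term comparison. Specifically, $\frac{1}{\lambda_1} \leq 1 + E(\vlambda)$ is equivalent to
\begin{equation*}
\frac{1 - \lambda_1}{\lambda_1} \;\leq\; \sum_{j=2}^{d} \frac{\lambda_j}{\lambda_1 - \lambda_j}.
\end{equation*}
Since the entries $\lambda_j$ in the spectrum are non-negative, we have $\lambda_1 - \lambda_j \leq \lambda_1$ for every $j \geq 2$. Hence each summand on the right-hand side satisfies $\frac{\lambda_j}{\lambda_1 - \lambda_j} \geq \frac{\lambda_j}{\lambda_1}$.

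Summing over $j = 2, \ldots, d$ and using the normalization $\sum_{i=1}^{d} \lambda_i = 1$, i.e., $\sum_{j=2}^{d} \lambda_j = 1 - \lambda_1$, gives
\begin{equation*}
\sum_{j=2}^{d} \frac{\lambda_j}{\lambda_1 - \lambda_j} \;\geq\; \sum_{j=2}^{d} \frac{\lambda_j}{\lambda_1} \;=\; \frac{1 - \lambda_1}{\lambda_1},
\end{equation*}
which is exactly the rearranged inequality. There is no real obstacle here: the proof is a one-line estimate that uses only the three defining conditions of $V$ (non-negativity, the gap $\lambda_1 > \lambda_2$ implicitly ensuring the denominators are positive, and the trace condition). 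The bound is in fact tight in the limiting case where all $\lambda_j = 0$ for $j \geq 2$ except a vanishing amount in the second slot, matching the claim in the text that $\frac{1}{\lambda_1} - 1 \leq E(\vlambda)$ and complementing the upper bound $E(\vlambda) \leq \frac{1-\lambda_1}{\Delta}$ stated earlier.
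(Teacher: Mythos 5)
Your proof is correct and is essentially the same argument as the paper's: both rest on the term-by-term bound $\frac{\lambda_j}{\lambda_1} \leq \frac{\lambda_j}{\lambda_1 - \lambda_j}$ together with the normalization $\sum_i \lambda_i = 1$; the paper merely writes it starting from $\frac{1}{\lambda_1} = \sum_j \lambda_j/\lambda_1$ instead of rearranging first. No gaps.
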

\begin{proof}
	The proof is a one-liner:
	\begin{equation*}
	\frac{1}{\lambda_1} = \sum_{j=1}^{d} \frac{\lambda_j}{\lambda_1} \leq 1 + \sum_{j=2}^{d} \frac{\lambda_j}{\lambda_1 - \lambda_j} = 1 + \sum_{j=2}^{d} \gamma_j(\vlambda) = 1 + E(\vlambda). \qedhere
	\end{equation*}
\end{proof} 
\noindent Conveniently, \cref{prop:e_bounds_lambda1} combines neatly with \cref{lem:gamma_ratio} to bound the rate of decay for $\gamma_j$, and by extension, for $E$.  
\begin{lemma}
	\label{lem:e_ratio}
	For all $\vlambda \in V$ and $\lambda' = F(\vlambda)$, 
	\begin{equation}
	\frac{E(\vlambda')}{E(\vlambda)} \leq \exp\of*{-\frac{1}{E(\vlambda) + \tfrac{5}{2}}}.
	\end{equation}
\end{lemma}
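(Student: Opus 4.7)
The plan is to derive this as a direct combination of \cref{cor:e_ratio_by_lambda} and \cref{prop:e_bounds_lambda1}, since together they already contain essentially all the content. The whole proof should be just a few lines.

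First, I would invoke \cref{cor:e_ratio_by_lambda}, which gives the baseline bound
\[
\frac{E(\vlambda')}{E(\vlambda)} \leq \exp\of*{-\frac{1}{1/\lambda_1 + \tfrac{3}{2}}}.
\]
This bound is phrased in terms of $\lambda_1$, so my next step is to replace $1/\lambda_1$ with an expression in $E(\vlambda)$. That is exactly what \cref{prop:e_bounds_lambda1} provides: $1/\lambda_1 \leq 1 + E(\vlambda)$, so that
\[
\tfrac{1}{\lambda_1} + \tfrac{3}{2} \;\leq\; E(\vlambda) + \tfrac{5}{2}.
\]

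Finally, since both sides are positive and $x \mapsto \exp(-1/x)$ is monotonically increasing for $x > 0$, taking reciprocals flips the inequality inside the exponent and we obtain
\[
\exp\of*{-\frac{1}{1/\lambda_1 + \tfrac{3}{2}}} \;\leq\; \exp\of*{-\frac{1}{E(\vlambda) + \tfrac{5}{2}}},
\]
which chained with the first inequality yields the claim. There is no real obstacle here; the substance was already done in \cref{lem:gamma_ratio} (the analytic inequality controlling $\gamma_j$ decay) and \cref{prop:e_bounds_lambda1} (which lets us self-reference $E$ instead of carrying along $\lambda_1$), and the present lemma is just the clean repackaging that makes the second stage's progress measure self-contained and thus suitable for a Grönwall/telescoping argument in the next step of the analysis.
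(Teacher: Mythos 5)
Your proposal is correct and matches the paper's proof, which likewise bounds the ratio via \cref{cor:e_ratio_by_lambda} and then applies \cref{prop:e_bounds_lambda1} inside the exponential; your explicit note on the monotonicity of $x \mapsto \exp(-1/x)$ just spells out the step the paper leaves implicit.
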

\begin{proof}
    Bound the ratio with \cref{cor:e_ratio_by_lambda} and then use \cref{prop:e_bounds_lambda1} inside the exponential.
\end{proof}
\noindent
\cref{lem:e_ratio} says that the rate of decay of $E$ is bounded by a function in $E$ itself. For the sake of simpler analysis, we bound the number of levels for $E$ to decrease by a constant factor $\alpha$. 
\begin{corollary}
	\label{cor:cut_e}
	Fix $\alpha \in (0,1)$. For all $\vlambda \in V$, $E(F^{(t)}(\vlambda)) \leq \alpha E(\vlambda)$ for some $t \in \mathbb N$ at most 
    \begin{equation}
	t \leq \ceiling*{\of*{E(\vlambda) + \tfrac{5}{2}} \nlog(\tfrac{1}{\alpha})}.
	\end{equation}
\end{corollary}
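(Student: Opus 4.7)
The plan is to telescope the ratio $E(F^{(t)}(\vlambda))/E(\vlambda)$ across $t$ iterations and apply \cref{lem:e_ratio} to each factor. Explicitly, I would write
\begin{equation}
\frac{E(F^{(t)}(\vlambda))}{E(\vlambda)} = \prod_{k=0}^{t-1} \frac{E(F^{(k+1)}(\vlambda))}{E(F^{(k)}(\vlambda))} \leq \prod_{k=0}^{t-1} \exp\of*{-\frac{1}{E(F^{(k)}(\vlambda)) + \tfrac{5}{2}}}.
\end{equation}
The factors on the right depend on the intermediate values of $E$, so the next step is to argue that $E$ is non-increasing along the trajectory. This follows immediately from \cref{lem:e_ratio}, since each per-step ratio is strictly less than $1$, and hence $E(F^{(k)}(\vlambda)) \leq E(\vlambda)$ for all $k \geq 0$.

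Since the function $x \mapsto \exp(-1/(x + \tfrac{5}{2}))$ is increasing in $x$, replacing each $E(F^{(k)}(\vlambda))$ with the initial value $E(\vlambda)$ only weakens the bound, giving
\begin{equation}
\frac{E(F^{(t)}(\vlambda))}{E(\vlambda)} \leq \exp\of*{-\frac{t}{E(\vlambda) + \tfrac{5}{2}}}.
\end{equation}
To achieve the target $E(F^{(t)}(\vlambda)) \leq \alpha E(\vlambda)$, I would solve $\exp(-t/(E(\vlambda) + \tfrac{5}{2})) \leq \alpha$, which rearranges to $t \geq (E(\vlambda) + \tfrac{5}{2}) \nlog(1/\alpha)$; taking the ceiling gives the advertised bound, and the first index $t$ at which the target is reached is at most this value. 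I do not anticipate any real obstacle: the corollary is essentially a direct aggregation of the per-step geometric decay proved in \cref{lem:e_ratio}, and the only substantive observation is the monotonicity of $E$ along the trajectory, which is itself an immediate consequence of that same lemma.
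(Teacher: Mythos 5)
Your proposal is correct and follows essentially the same route as the paper: telescope the ratio, bound each factor by \cref{lem:e_ratio}, use the monotone decrease of $E$ to replace the intermediate values by $E(\vlambda)$, and solve $\exp(-t/(E(\vlambda)+\tfrac{5}{2})) \leq \alpha$ for $t$. No gaps.
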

\begin{proof}
	We telescope (much like the proof of \cref{lem:stage1_length}), use \cref{lem:e_ratio} on each factor, and use the last term to upper bound the rest since $E$ is clearly decreasing with each iteration.
	\begin{equation}
		\frac{E(F^{(t)}(\vlambda))}{E(\vlambda)} = \prod_{k=0}^{t-1}\frac{E(F^{(k+1)}(\vlambda))}{E(F^{(k)}(\vlambda))} \leq \prod_{k=0}^{t-1} \exp\of*{-\frac{1}{E(F^{(k)}(\vlambda)) + \tfrac{5}{2}}} \leq \exp\of*{-\frac{t}{E(\vlambda) + \tfrac{5}{2}}} \,.
	\end{equation}
	We can make the RHS less than $\alpha$ with any $t \geq   \of*{E(\vlambda) + \tfrac{5}{2}} \nlog(\tfrac{1}{\alpha})$, so we take the ceiling.
	\begin{equation}
	t \leq \ceiling*{\of*{E(\vlambda) + \tfrac{5}{2}} \nlog(\tfrac{1}{\alpha})}. \qedhere
	\end{equation}
\end{proof}

\begin{lemma}
	\label{lem:stage2_length}
	For $\vlambda \in V$ such that $\gamma_2(\vlambda) \leq 1$, the minimum $l_2 \in \mathbb N$ for which $E(F^{(l_2)}(\vlambda)) \leq \tfrac{1}{2}$ satisfies
	\begin{equation}
	l_2 \leq 2E(\vlambda)\nlog 2 + (1 + \tfrac{5}{2} \nlog 2) \ceiling*{1 + \frac{\nlog E(\vlambda)}{\nlog 2}}.
	\end{equation}
\end{lemma}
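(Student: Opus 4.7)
The plan is to iterate the single-block progress guarantee of \cref{cor:cut_e} with $\alpha = 1/2$, halving $E$ a logarithmic number of times until it drops below $\tfrac{1}{2}$. Let $E_0 := E(\vlambda)$, and define stopping times $t_k$ as the first level at which $E(F^{(t_k)}(\vlambda)) \leq E_0 / 2^k$. Since \cref{lem:e_ratio} forces $E$ to be strictly decreasing, once $E$ drops below a threshold it remains there, so the $t_k$ are well defined and non-decreasing. By \cref{cor:cut_e} applied at level $t_{k-1}$, the block length satisfies
\begin{equation*}
t_k - t_{k-1} \;\leq\; \bigl\lceil (E(F^{(t_{k-1})}(\vlambda)) + \tfrac{5}{2})\nlog 2\bigr\rceil \;\leq\; \bigl\lceil (E_0/2^{k-1} + \tfrac{5}{2})\nlog 2\bigr\rceil,
\end{equation*}
where the second inequality uses the invariant $E(F^{(t_{k-1})}(\vlambda)) \leq E_0/2^{k-1}$.

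Next, set $K := \lceil 1 + \nlog E_0 / \nlog 2 \rceil$, so that $E_0/2^K \leq \tfrac{1}{2}$ and hence $l_2 \leq t_K$. Telescoping the block bounds, dropping each ceiling via $\lceil x \rceil \leq x + 1$, and using the geometric estimate $\sum_{k=1}^{K} E_0/2^{k-1} < 2E_0$ yields
\begin{equation*}
l_2 \;\leq\; t_K \;\leq\; \sum_{k=1}^{K}\bigl((E_0/2^{k-1} + \tfrac{5}{2})\nlog 2 + 1\bigr) \;\leq\; 2E_0\nlog 2 + \bigl(1 + \tfrac{5}{2}\nlog 2\bigr) K,
\end{equation*}
which, after substituting $K = \lceil 1 + \nlog E_0/\nlog 2 \rceil$ and replacing $E_0$ by $E(\vlambda)$, is exactly the claimed bound.

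The argument is almost entirely routine given \cref{cor:cut_e}; the main bookkeeping obstacle is handling the edge case in which $E_0$ is already small. When $E_0 \leq \tfrac{1}{2}$ one has $\nlog E_0/\nlog 2 \leq -1$, so $K \leq 0$, no halving blocks are required, and the stated bound degenerates correctly (with $l_2 = 0$ and a small nonnegative right-hand side). I do not see the hypothesis $\gamma_2(\vlambda) \leq 1$ entering the halving calculation directly; I would verify while writing the formal proof whether it is used implicitly—for instance, to guarantee that one is genuinely in the stage-2 regime so that $\tfrac{1}{2}$ is a meaningful target—or whether it is simply carried as an invariant for use in stage 3.
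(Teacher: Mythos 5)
Your proof is correct and takes essentially the same route as the paper's: repeatedly apply \cref{cor:cut_e} with $\alpha=\tfrac12$ to halve $E$, then add up the block lengths, which the paper merely packages as the recurrence $\st(E)=\st(E/2)+(\nlog 2)\of*{E+\tfrac52}+1$ solved by induction rather than as an explicit telescoping sum over stopping times. Your side remarks also match the paper: its proof likewise never invokes the hypothesis $\gamma_2(\vlambda)\leq 1$ and only asserts the bound for $E\geq\tfrac12$ (the regime in which the lemma is actually used), so your edge-case caveat is the same boundary issue the paper quietly sidesteps.
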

\begin{proof}
    Let us use \cref{cor:cut_e} with $\alpha = 2$ to repeatedly cut $E = E(\vlambda)$ in half. This gives the recurrence 
	\begin{equation}
	\st(E) = 
	\begin{cases}
		\st(E/2) + (\nlog 2)(E + \tfrac{5}{2}) + 1, &\text{if $E > \tfrac{1}{2}$,} \\
		0, & \text{if $E \leq \tfrac{1}{2}$.}		
	\end{cases}
	\end{equation}
	where $\st(E)$ bounds the number of steps to reduce $E$ below $\tfrac{1}{2}$. We claim the solution to this recurrence satisfies 
	\begin{equation}
	\st(E) \leq 2E \nlog 2 + (1 + \tfrac{5}{2} \nlog 2) \ceiling*{1 + \frac{\nlog E}{\nlog 2}},
	\end{equation}
	as long as $E \geq \tfrac{1}{2}$. This can be confirmed with a simple induction: 
	\begin{align}
		\st(E) &= \st(E/2) + (\nlog 2)(E + \tfrac{5}{2}) + 1 \\
		&\leq 2(E/2) \nlog 2 + (1 + \tfrac{5}{2} \nlog 2) \ceiling*{\frac{\nlog E}{\nlog 2}}  + (\nlog 2)(E + \tfrac{5}{2}) + 1 \\
		&= 2E \nlog 2+ (1 + \tfrac{5}{2} \nlog 2) \ceiling*{\frac{\nlog E}{\nlog 2} + 1}.
	\end{align} 
\end{proof}

\subsubsection{Third Stage}

In the final stage, we boost $\lambda_1$ up to our desired level of fidelity, $1- \epsilon$. We use the progress measure $\eta(\vlambda) := 1 - \lambda_1$, and we show it approaches zero geometrically. It will be important to track the probability of successful swap tests because for the first time in this algorithm, it can be significantly higher than $\tfrac{1}{2}$.

\begin{lemma}
    \label{lem:eta_leq_e}
    For all $\vlambda \in V$, $\eta(\vlambda) \leq E(\vlambda)$.
\end{lemma}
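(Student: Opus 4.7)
The plan is to give a one-line proof by comparing the two sums term-by-term after writing $\eta$ in the most convenient form. First I would use the spectrum constraint $\sum_{i=1}^{d}\lambda_i=1$ to rewrite
\[
\eta(\vlambda) = 1-\lambda_1 = \sum_{j=2}^{d}\lambda_j,
\]
so that both $\eta(\vlambda)$ and $E(\vlambda)$ are sums indexed by $j=2,\dots,d$.

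Next, I would observe that since $\vlambda\in V$ satisfies $0\le \lambda_j\le \lambda_1\le 1$, we have $\lambda_1 - \lambda_j \le \lambda_1\le 1$ for every $j\ge 2$. Consequently
\[
\frac{\lambda_j}{\lambda_1-\lambda_j}\ \ge\ \lambda_j \qquad (j\ge 2),
\]
with the denominator being strictly positive for $j=2$ by the gap condition $\lambda_1>\lambda_2$ and nonnegative (possibly zero only when $\lambda_j=0$, in which case both sides vanish and the inequality is trivial). Summing over $j\ge 2$ yields
\[
E(\vlambda) = \sum_{j=2}^{d}\frac{\lambda_j}{\lambda_1-\lambda_j}\ \ge\ \sum_{j=2}^{d}\lambda_j = \eta(\vlambda),
\]
which is exactly the claimed inequality.

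There is essentially no obstacle here: the argument is entirely elementary, and the only thing to be mildly careful about is handling the denominator $\lambda_1-\lambda_j$ when $\lambda_j$ may equal $\lambda_1$. Under the definition of $V$ this only occurs if $\lambda_j=0$ (since $\lambda_1>\lambda_2\ge\lambda_j$ forces $\lambda_j<\lambda_1$ whenever $j\ge 2$), so the per-term bound $\lambda_j/(\lambda_1-\lambda_j)\ge\lambda_j$ is always well defined and valid.
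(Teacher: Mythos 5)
Your proof is correct, and it takes a slightly different, more self-contained route than the paper's. The paper obtains the lemma as a corollary of \cref{prop:e_bounds_lambda1} (namely $1/\lambda_1 \leq 1 + E(\vlambda)$), rearranging to $\eta(\vlambda) = 1-\lambda_1 \leq (1-\lambda_1)/\lambda_1 \leq E(\vlambda)$, whereas you compare the two quantities directly: writing $1-\lambda_1 = \sum_{j\geq 2}\lambda_j$ and using the termwise bound $\lambda_j \leq \lambda_j/(\lambda_1-\lambda_j)$, which holds because $\lambda_1-\lambda_j \leq \lambda_1 \leq 1$. Both arguments are one-line termwise comparisons resting on $\sum_i\lambda_i = 1$; the paper's detour through the proposition yields the slightly stronger fact $E(\vlambda) \geq 1/\lambda_1 - 1$, which it reuses elsewhere (it sets the stage thresholds in the analysis), while your version proves exactly the stated inequality with no auxiliary statement. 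One small remark: your caution about the denominator is misplaced in its details --- for every $j \geq 2$ one has $\lambda_j \leq \lambda_2 < \lambda_1$, so $\lambda_1 - \lambda_j \geq \lambda_1 - \lambda_2 > 0$ regardless of whether $\lambda_j = 0$ (and when $\lambda_j = 0$ the denominator equals $\lambda_1 > 0$, not $0$); this does not affect the validity of your argument.
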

\begin{proof}
	\cref{prop:e_bounds_lambda1} gives \( \frac{1}{\lambda_1} \leq 1 + E(\vlambda) \), which we rearrange to 
	\begin{equation}
	\eta(\vlambda) = 1 - \lambda_1 \leq \frac{1 - \lambda_1}{\lambda_1} \leq E(\vlambda). \qedhere
	\end{equation}
\end{proof}

As in the second regime, $E$ decreases geometrically (\cref{lem:e_ratio}). However, this bound scales poorly as $E \to 0$. Instead, we prove a new bound in terms of $\eta$ which gives the best ratio ($\tfrac{1}{2}$) as $\eta \to 0$. 
\begin{lemma}
    \label{lem:e_ratio_stage3}
    For all $\vlambda \in V$ and $\vlambda' = F(\vlambda)$, 
    \begin{equation}
    \frac{E(\vlambda')}{E(\vlambda)} \leq \frac{1+\eta(\vlambda)}{2} \leq \frac{1}{2} \exp(\eta(\vlambda)).
    \end{equation}
\end{lemma}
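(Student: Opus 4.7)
The plan is to reduce the inequality on the ratio $E(\vlambda')/E(\vlambda)$ to a pointwise inequality on the individual ratios $\gamma_j(\vlambda')/\gamma_j(\vlambda)$, which have a very clean closed form from the proof of \cref{lem:gamma_ratio}. Specifically, that proof showed
\begin{equation*}
\frac{\gamma_j(\vlambda')}{\gamma_j(\vlambda)} = \frac{1+\lambda_j}{1+\lambda_1+\lambda_j},
\end{equation*}
and since $E(\vlambda') / E(\vlambda)$ is a convex combination of these ratios (as in \cref{cor:e_ratio_by_lambda}), it is bounded by the maximum one. The function $x \mapsto \frac{1+x}{1+\lambda_1+x}$ is increasing in $x$ (for fixed $\lambda_1 > 0$, since its derivative $\lambda_1/(1+\lambda_1+x)^2$ is positive), so the maximum over $j \geq 2$ is attained at $j = 2$, giving
\begin{equation*}
\frac{E(\vlambda')}{E(\vlambda)} \leq \frac{1+\lambda_2}{1+\lambda_1+\lambda_2}.
\end{equation*}

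The main content of the lemma is then the algebraic inequality
\begin{equation*}
\frac{1+\lambda_2}{1+\lambda_1+\lambda_2} \leq \frac{1+\eta(\vlambda)}{2} = \frac{2-\lambda_1}{2}.
\end{equation*}
My plan is to clear denominators (both are positive) and simplify: the difference between the two sides turns out to be $\lambda_1(1 - \lambda_1 - \lambda_2)$, which is nonnegative exactly when $\lambda_1 + \lambda_2 \leq 1$. This last inequality is immediate from the definition of $V$: since all $\lambda_i \geq 0$ and $\sum_i \lambda_i = 1$, we have $\lambda_1 + \lambda_2 \leq \sum_i \lambda_i = 1$.

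The second half of the claim, $\frac{1+\eta}{2} \leq \frac{1}{2}\exp(\eta)$, reduces to $1 + \eta \leq \exp(\eta)$, which is the standard inequality applied at $x = \eta(\vlambda) \geq 0$. I don't expect any real obstacle here; the only step requiring a touch of care is verifying that the maximization over $j \geq 2$ genuinely uses only $\lambda_j \leq \lambda_2$, so the bound $\lambda_j \leq \lambda_2 \leq 1 - \lambda_1$ feeds directly into the $\eta$-based estimate.
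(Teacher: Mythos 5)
Your proof is correct and follows essentially the same route as the paper: bound $E(\vlambda')/E(\vlambda)$ by the maximum ratio $\frac{1+\lambda_j}{1+\lambda_1+\lambda_j}$ from \cref{lem:gamma_ratio}, exploit monotonicity in $\lambda_j$, and finish with $1+x\leq e^x$. The only cosmetic difference is the last algebraic step: the paper substitutes $\lambda_j \leq \eta(\vlambda)$ directly, so the denominator becomes exactly $1+\lambda_1+\eta = 2$, whereas you pass through $\frac{1+\lambda_2}{1+\lambda_1+\lambda_2}$ and clear denominators, reducing to $\lambda_1(1-\lambda_1-\lambda_2)\geq 0$.
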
    
\begin{proof}
The proof is a simple calculation and shares some similarity with the proof of \cref{lem:e_ratio}.
\begin{align}
	\frac{E(\vlambda')}{E(\vlambda)} &\leq \max_{j} \frac{\gamma_j(\vlambda')}{\gamma_j(\vlambda)} \\
	&= \max_{j} \frac{1 + \lambda_j}{1 + \lambda_1 + \lambda_j} && \text{by \cref{lem:gamma_ratio}} \\
	&\leq \frac{1 + \eta(\vlambda)}{1 + \lambda_1 + \eta(\vlambda)} && \text{since $\lambda_j \leq \eta(\vlambda)$ for $j\geq 2$} \\
	&= \frac{1 + \eta(\vlambda)}{2}
\end{align}
We finally bound this with $\tfrac{1}{2} \exp(\eta(\vlambda))$ using the standard inequality $1+x \leq \exp(x)$.
\end{proof}

\begin{lemma}
    \label{lem:stage3_steps}
    Let $\vlambda \in V$ be such that $E(\vlambda) \leq \tfrac{1}{2}$. Then $\eta(F^{(t)}(\vlambda)) \leq \frac{e^2}{2^{t}} E(\vlambda)$ for all $t \in \mathbb N$. 
\end{lemma}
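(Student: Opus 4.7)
The plan is to iterate the multiplicative contraction from \cref{lem:e_ratio_stage3} across all $t$ levels and invoke $\eta_t \leq E_t$ from \cref{lem:eta_leq_e} at the end. Writing $E_k := E(F^{(k)}(\vlambda))$ and $\eta_k := \eta(F^{(k)}(\vlambda))$, telescoping the bound $E_{k+1} \leq \tfrac{1+\eta_k}{2} E_k$ yields
\begin{equation*}
E_t \;\leq\; E_0 \prod_{k=0}^{t-1} \frac{1 + \eta_k}{2} \;\leq\; \frac{E_0}{2^t} \exp\!\left(\sum_{k=0}^{t-1} \eta_k\right),
\end{equation*}
using $1+x \leq e^x$. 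Combined with $\eta_t \leq E_t$, the target inequality reduces to the single estimate $\sum_{k=0}^{\infty} \eta_k \leq 2$.

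The main obstacle is that bounding $\eta_k$ through $\eta_k \leq E_k$ creates a mild circularity: the bound on $E_k$ depends on $\sum_{j<k} \eta_j$, which in turn depends on the $E_j$'s. I plan to break this with a two-step bootstrap. First, since $\eta_k \leq 1$ trivially, every factor $(1+\eta_k)/2$ is at most $1$, so $E_k$ is non-increasing; hence $E_k \leq E_0 \leq \tfrac{1}{2}$ and in particular $\eta_k \leq \tfrac{1}{2}$ for all $k$. Feeding this crude bound back into \cref{lem:e_ratio_stage3} gives $E_{k+1} \leq (3/4)\, E_k$, and therefore the geometric decay $E_k \leq (3/4)^k E_0$.

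This decay is enough to control the sum:
\begin{equation*}
\sum_{k=0}^{\infty} \eta_k \;\leq\; \sum_{k=0}^{\infty} E_k \;\leq\; E_0 \sum_{k=0}^{\infty} (3/4)^k \;=\; 4 E_0 \;\leq\; 2.
\end{equation*}
Substituting $\exp\!\left(\sum \eta_k\right) \leq e^2$ into the telescoped bound gives $E_t \leq (e^2/2^t)\, E_0$, and one final application of $\eta_t \leq E_t$ delivers $\eta(F^{(t)}(\vlambda)) \leq \tfrac{e^2}{2^t}\, E(\vlambda)$, as required. The constants $e^2$ and the factor $2^t$ in the denominator come out cleanly from this argument; the assumption $E_0 \leq \tfrac12$ is used precisely where it makes $4 E_0 \leq 2$, so the exponent of $e$ is tight against this scheme of proof.
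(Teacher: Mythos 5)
Your proof is correct and follows essentially the same route as the paper: both first establish the geometric decay $E_k \leq (3/4)^k E(\vlambda)$ via $\eta_k \leq E_k \leq \tfrac12$ in \cref{lem:e_ratio_stage3}, then telescope the $\tfrac12\exp(\eta)$ form of that lemma and bound $\sum_k \eta_k \leq 2$ to obtain the $e^2/2^t$ factor, finishing with \cref{lem:eta_leq_e}. Your explicit bootstrap for the monotonicity of $E_k$ is just a slightly more detailed version of the paper's ``easy induction.''
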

\begin{proof}
	We begin with $E(\vlambda) \leq \tfrac{1}{2}$, and it is clear from \cref{lem:e_ratio} or \cref{lem:e_ratio_stage3} that $E$ is decreasing over time. In fact, \cref{lem:e_ratio_stage3} shows it decreases geometrically with ratio $\leq \tfrac{3}{4}$. By an easy induction, 
	\begin{equation}
		\label{eq:eta_decay}
		\eta(F^{(t)}(\vlambda)) \leq E(F^{(t)}(\vlambda)) \leq E(\vlambda) \of*{\frac{3}{4}}^{t} \leq \frac{1}{2} \of*{\frac{3}{4}}^{t}.
	\end{equation}
    
    Finally, \cref{lem:e_ratio_stage3} also gives a bound based on an exponential.
    \begin{align}
    E(F^{(t)}(\vlambda)) &\leq E(\vlambda) \prod_{k=0}^{t-1} \tfrac{1}{2} \exp(\eta(F^{(k)}(\vlambda)) && \text{by telescoping and \cref{lem:e_ratio_stage3}} \\
    &= \frac{E(\vlambda)}{2^t} \exp \of*{\sum_{k=0}^{t-1} \eta(F^{(k)}(\vlambda))} \\
    &\leq \frac{E(\vlambda)}{2^{t}} \exp \of*{\frac{1}{2} \sum_{k=0}^{\infty} \of*{\frac{3}{4}}^{k}} && \text{using \eqref{eq:eta_decay}} \\
    &= \frac{e^2}{2^{t}} E(\vlambda).
    \end{align}
    Using \cref{lem:eta_leq_e} ($\eta(F^{(t)}(\vlambda)) \leq E(F^{(t)}(\vlambda))$) again finishes the proof.
\end{proof}

Finally, let $c(\vlambda)$ be the expected number of samples to achieve one successful swap test (and thus generate one sample for the next level). 
\begin{proposition}
    \label{prop:cost}
	For $\vlambda \in V$, 
	\begin{equation}
	c(\vlambda) \leq 2 \exp(\eta(\vlambda)).
	\end{equation}
\end{proposition}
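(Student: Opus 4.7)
The plan is to compute $c(\vlambda)$ in closed form and then reduce the claim to a one-variable calculus inequality. Each iteration of the swap test inside \SWAPGADGET\ consumes two input copies (the two registers entering the controlled-swap in \cref{fig:swap}) and independently succeeds with probability $v(\vlambda) = (1+\norm{\vlambda}^2)/2$ by \cref{eq:prob}. Since the number of iterations until the first success is geometric with mean $1/v(\vlambda)$, the expected number of input samples consumed per successful swap test is
\begin{equation*}
c(\vlambda) \;=\; \frac{2}{v(\vlambda)} \;=\; \frac{4}{1 + \norm{\vlambda}^2}.
\end{equation*}

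Next, I would use the crude lower bound $\norm{\vlambda}^2 \geq \lambda_1^2 = (1-\eta)^2$, which gives
\begin{equation*}
c(\vlambda) \;\leq\; \frac{4}{1 + (1-\eta)^2} \;=\; \frac{2}{1 - \eta + \eta^2/2}.
\end{equation*}
It therefore suffices to prove $1 - \eta + \eta^2/2 \geq \exp(-\eta)$ for $\eta \in [0,1]$; inverting this inequality then yields exactly $c(\vlambda) \leq 2\exp(\eta)$.

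This remaining step is a standard second-order Taylor bound, which I would verify by setting $g(x) := 1 - x + x^2/2 - \exp(-x)$ and observing that $g(0) = g'(0) = 0$ while $g''(x) = 1 - \exp(-x) \geq 0$ for $x \geq 0$; hence $g'$ and then $g$ are non-negative on $[0,\infty)$. No real obstacle arises: the bound is already tight at the pure-state limit $\eta = 0$ (both sides equal $2$), and the only design choice is the spectral estimate $\norm{\vlambda}^2 \geq \lambda_1^2$, which happens to match the quadratic Taylor expansion of $\exp(-\eta)$ exactly, so no finer control on the rest of the spectrum is needed.
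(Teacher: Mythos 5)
Your proposal is correct and follows essentially the same route as the paper: compute $c(\vlambda)=4/(1+\norm{\vlambda}^2)$, bound $\norm{\vlambda}^2\geq\lambda_1^2=(1-\eta)^2$, and reduce to the elementary inequality $1-\eta+\tfrac{1}{2}\eta^2\geq\exp(-\eta)$ for $\eta\geq 0$. The only cosmetic difference is that the paper verifies this last inequality via the Lagrange form of the Taylor remainder, whereas you verify it by checking $g(0)=g'(0)=0$ and $g''\geq 0$; both are standard and equivalent.
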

\begin{proof}
	The cost per swap test is $2$ samples, and the probability of success is \( p(\vlambda) = \frac{1 + \norm{\vlambda}^2}{2} \), so the expected cost is \( c(\vlambda) = \frac{4}{1 + \norm{\vlambda}^2} \). We bound this as follows:
	\begin{equation}
	\frac{4}{1 + \norm{\vlambda}^2} \leq \frac{4}{1 + \lambda_1^2} = \frac{2}{1 - \eta(\vlambda) + \tfrac{1}{2}\eta(\vlambda)^2}
	\end{equation}
    It is a standard result in calculus that the difference between a smooth function ($\exp(-x)$, in this case) and its truncated Taylor series (i.e., $1 - x + \tfrac{1}{2}x^2$),  
    \begin{equation}
    (1 - x + \tfrac{1}{2}x^2) - \exp(-x) = \tfrac{1}{6}\xi^3 \geq 0,
    \end{equation}
    is the next term of the series evaluated at some $\xi \in [0,x]$. Thus, $\frac{1}{1-x+\tfrac{1}{2}x^2} \leq \exp(x)$ for $x \geq 0$, and the result follows.
\end{proof}

\begin{lemma}
	\label{lem:stage3_cost}
	Let $\vlambda \in V$ be such that $E(\vlambda) \leq \tfrac{1}{2}$. Given $\epsilon > 0$, the expected cost to produce $1$ purified sample is at most $2e^4 E(\vlambda) / \epsilon$.
\end{lemma}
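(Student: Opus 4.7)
The plan is to combine three ingredients established earlier in this stage: the bound from \cref{lem:stage3_steps} on how quickly $\eta$ decays across recursion levels, the per-level expected cost bound $c(\vlambda) \leq 2\exp(\eta(\vlambda))$ from \cref{prop:cost}, and the geometric decay $\eta(F^{(k)}(\vlambda)) \leq \tfrac{1}{2}(3/4)^k$ derived as \eqref{eq:eta_decay}. These combine through a telescoping product over the levels of \cref{alg:purify}.

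First I would choose $t$ to be the smallest integer for which \cref{lem:stage3_steps} certifies $\eta(F^{(t)}(\vlambda)) \leq \epsilon$. Solving $e^2 E(\vlambda)/2^t \leq \epsilon$ gives $t \leq \ceiling{\log_2(e^2 E(\vlambda)/\epsilon)}$, and hence $2^t \leq 2 e^2 E(\vlambda)/\epsilon$ (treating the degenerate case $E(\vlambda) \leq \epsilon/e^2$, where $t = 0$ and the single-sample cost is already a valid purified output, as covered separately).

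Next, by the recursive structure of \cref{alg:purify}, the expected number of initial samples required to produce one output at level $t$ is the telescoping product $\prod_{k=0}^{t-1} c(F^{(k)}(\vlambda))$. Bounding each factor by $2\exp(\eta(F^{(k)}(\vlambda)))$ via \cref{prop:cost},
\begin{equation*}
\prod_{k=0}^{t-1} c(F^{(k)}(\vlambda)) \leq 2^t \exp\Bigl(\sum_{k=0}^{t-1} \eta(F^{(k)}(\vlambda))\Bigr) \leq 2^t \exp\Bigl(\sum_{k=0}^{\infty} \tfrac{1}{2}(3/4)^k\Bigr) = 2^t e^2,
\end{equation*}
where the middle inequality uses the geometric tail bound from \eqref{eq:eta_decay}. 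Combining with $2^t \leq 2 e^2 E(\vlambda)/\epsilon$ yields the desired bound of $2 e^4 E(\vlambda)/\epsilon$.

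The main subtlety is controlling the multiplicative overhead from the per-level swap-test costs: a priori the product $\prod c(F^{(k)}(\vlambda))$ could grow in $t$ faster than the naive $2^t$. The crucial observation, enabled by the geometric decay of $\eta$ established in \cref{lem:stage3_steps}, is that the sum of $\eta$ values over \emph{all} levels of recursion (even the infinite tail) is a universal constant $2$, so the exponential overhead collapses into the constant factor $e^2$. Everything else is routine algebraic manipulation.
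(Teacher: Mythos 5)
Your proposal is correct and follows essentially the same route as the paper's proof: choose the minimal $t$ with $2^t \geq e^2 E(\vlambda)/\epsilon$ (so $2^t \leq 2e^2E(\vlambda)/\epsilon$), then telescope the per-level costs using \cref{prop:cost} and the geometric decay \eqref{eq:eta_decay} so that $\sum_k \eta(F^{(k)}(\vlambda)) \leq 2$ collapses the overhead into the factor $e^2$. Your explicit handling of the degenerate $t=0$ case is a minor addition not present in (and not needed beyond) the paper's argument.
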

\begin{proof}
	\cref{lem:stage3_steps} gives an upper bound on the number of steps to reach a desired purity.
    Specifically, if we want $\eta(F^{(t)}(\vlambda)) \leq \frac{e^2}{2^{t}} E(\vlambda) \leq \epsilon$ then we may choose the minimal $t$ such that $2^{t} \geq \tfrac{1}{\epsilon} e^2 E(\vlambda)$. Since $t$ is an integer, we may overshoot, but the above chosen $t$ must satisfy: 
    \begin{equation}
        \label{eq:stage3_cost_2t}
        2^{t} \leq \frac{2e^{2} E(\vlambda)}{\epsilon}.
    \end{equation}
	Meanwhile, the total cost over $t$ steps is 
	\begin{align}
		\prod_{k=0}^{t-1} c(F^{(k)}(\vlambda)) &\leq \prod_{k=0}^{t-1} 2 \exp(\eta(F^{(k)}(\vlambda))) && \text{by \cref{prop:cost}} \\
		&\leq 2^{t} \exp\of*{\sum_{k=0}^{\infty} \eta(F^{(k)}(\vlambda))} \\
        &\leq 2^{t} \exp \of*{\frac{1}{2} \sum_{k=0}^{\infty} \of*{\frac{3}{4}}^{k}} && \text{using \eqref{eq:eta_decay}} \\
		&\leq 2^{t} e^2 \\
        &\leq \frac{2e^4 E(\vlambda)}{\epsilon} && \text{by \eqref{eq:stage3_cost_2t}}
	\end{align}
	by a similar argument to \cref{lem:stage3_steps}.
\end{proof}

\subsubsection{Sample Complexity Upper Bound}

We have an analysis for each of the three stages with \cref{lem:stage1_length}, \cref{lem:stage2_length}, and \cref{lem:stage3_cost}. We now combine them to give an expected sample complexity across various parameter regimes. 
\begin{theorem}
\label{thm:proof2_sample_complexity}
    Given $\vlambda \in V$ and $\epsilon > 0$, the expected sample complexity of one purified sample of fidelity $\geq 1 - \epsilon$ is at most 
    \begin{equation}
    \label{eq:proof-2-complexity}
    \begin{cases}
            \Theta \of*{ \frac{1-\lambda_1}{\epsilon}} & \text{if $\lambda_1 \geq \tfrac{2}{3}$,} \\
            \Theta \of*{ \frac{1}{\epsilon}} \cdot 4^{2 \ln 2\tfrac{1 - \lambda_1}{\Delta} + (\tfrac{5}{2} + \tfrac{1}{\ln 2}) \ln \frac{1-\lambda_1}{\Delta}} & \text{if $\lambda_1 \leq \tfrac{2}{3}$ and $\frac{1-\lambda_1}{\Delta} \leq \tfrac{1}{\lambda_1}$} \\
            \Theta \of*{ \frac{1}{\epsilon}}\cdot 4^{ \of*{\frac{1}{\lambda_1} - 1} \left[ \nlog\of*{\frac{(1-\lambda_1)\lambda_1}{\Delta}} + 2 \ln 2 \right] + \frac{5}{2}  \nlog\of*{\frac{1-\lambda_1}{\Delta}} + \log_2 \frac{1}{\lambda_1}} & \text{if $\frac{1-\lambda_1}{\Delta} \geq \tfrac{1}{\lambda_1}$} 
        \end{cases}
    \end{equation}
\end{theorem}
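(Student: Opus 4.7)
The plan is to express the total expected sample complexity as the product $\prod_{k=0}^{n-1} c(\vlambda^{(k)})$ of per-level costs and bound it by combining the stage-wise results \cref{lem:stage1_length}, \cref{lem:stage2_length}, and \cref{lem:stage3_cost}. For stages 1 and 2 I will use the crude bound $c(\vlambda) \leq 4$ (valid since $p(\vlambda) = (1+\norm{\vlambda}^2)/2 \geq 1/2$), which yields multiplicative factors of $4^{l_1}$ and $4^{l_2}$. Stage 3 is already packaged by \cref{lem:stage3_cost} as at most $2e^4 E/\epsilon$, where $E$ is the $E$-value entering stage 3 (necessarily $\leq \tfrac{1}{2}$). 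So the overall expected cost is bounded by $4^{l_1+l_2}\cdot 2e^4 E_{\text{stage 3}}/\epsilon$.

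The three cases in the theorem correspond to which of the stages are active, and I would handle them as follows. For \textbf{Case 3} ($\tfrac{1-\lambda_1}{\Delta} \geq \tfrac{1}{\lambda_1}$) all three stages run: I apply \cref{lem:stage1_length} with the universal estimate $E(\vlambda) \leq \tfrac{1-\lambda_1}{\Delta}$ to bound $l_1$, and since stage 1 ends with $E \leq \tfrac{1}{\lambda_1}$ this value seeds \cref{lem:stage2_length} to bound $l_2$; combining $4^{l_1+l_2}$ with the stage-3 factor gives the third case of the bound. For \textbf{Case 2} ($\lambda_1 \leq 2/3$ and $\tfrac{1-\lambda_1}{\Delta} \leq \tfrac{1}{\lambda_1}$), stage 1 is unnecessary because $E \leq \tfrac{1-\lambda_1}{\Delta}$ already satisfies the stage-1 termination condition, so we enter stage 2 directly with this value of $E$ and \cref{lem:stage2_length} yields exactly the Case 2 bound. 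For \textbf{Case 1} ($\lambda_1 \geq 2/3$), the key observation is that $\lambda_2 \leq 1-\lambda_1$ forces $\Delta \geq 2\lambda_1 - 1 \geq \tfrac{1}{3}$, hence $E \leq \tfrac{1-\lambda_1}{\Delta} \leq 3(1-\lambda_1) \leq 1$; a constant number of stage-2 iterations then suffice to drive $E$ below $\tfrac{1}{2}$, after which \cref{lem:stage3_cost} gives $2e^4 E_{\text{stage 3}}/\epsilon = \Theta((1-\lambda_1)/\epsilon)$.

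The main technical obstacle is matching the specific logarithmic and polynomial terms in the exponents of \cref{eq:proof-2-complexity}, particularly in Case 3 where the expression $(\tfrac{1}{\lambda_1}-1)\bigl[\nlog\tfrac{(1-\lambda_1)\lambda_1}{\Delta} + 2\ln 2\bigr] + \tfrac{5}{2}\nlog\tfrac{1-\lambda_1}{\Delta} + \log_2\tfrac{1}{\lambda_1}$ simultaneously packages the $l_1$ and $l_2$ contributions; I will need to be careful so that the $\ln((1-\lambda_1)\lambda_1/\Delta)$ appearing in $l_1$ and the $\ln(1/\lambda_1)$ from $l_2$ line up correctly. A secondary subtlety is that \cref{lem:stage1_length} uses the \emph{initial} value of $\lambda_1$; since $\lambda_1$ is non-decreasing under $F$, this initial lower bound remains valid throughout stage 1 and the quantity $1/\lambda_1^{(0)}$ transfers cleanly as the seed $E$ for stage 2. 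Finally, absorbing the $e^4$ prefactor, base conversions between $\ln$ and $\log_2$, and ceiling corrections into the $\Theta(\cdot)$ constants is routine arithmetic rather than a substantive new step.
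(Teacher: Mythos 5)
Your proposal is correct and follows essentially the same route as the paper: the same three-stage decomposition with $C_1\le 4^{l_1}$, $C_2\le 4^{l_2}$ from the probability-$\ge\tfrac12$ bound, the substitution $E(\vlambda)\le\frac{1-\lambda_1}{\Delta}$, and the same case analysis combining \cref{lem:stage1_length}, \cref{lem:stage2_length}, and \cref{lem:stage3_cost}. Your handling of the $\lambda_1\ge\tfrac23$ case is in fact slightly more careful than the paper's (which treats it as purely stage 3), since you note that $E$ may initially exceed $\tfrac12$ there and account for the constant number of extra stage-2 levels, which only costs a constant factor.
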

\begin{proof}
    The high-level proof idea is clear: the sample complexity is a product $C_1 C_2 C_3$ of the complexities $C_1$, $C_2$, $C_3$ for each stage. Moreover, we have already bounded each stage in some way. 
    \begin{enumerate}
        \item \Cref{lem:stage1_length} bounds the length of stage 1, 
        	\begin{equation}
            \label{eq:steps-stage-1}
            l_1 \leq (\frac{1}{\lambda_1} + \frac{3}{2}) \nlog\of*{E(\vlambda) \lambda_1} + O(1).
        	\end{equation}
        \item \Cref{lem:stage2_length} bounds the length of stage 2,
        	\begin{equation}
            \label{eq:steps-stage-2}
        	l_2 \leq 2E\nlog 2 + \of*{\frac{1}{\nlog 2} + \frac{5}{2}} \nlog E + O(1).
        	\end{equation}
        \item \Cref{lem:stage3_cost} bounds the sample complexity of stage 3,
            \begin{equation}
                \label{eq:complexity-stage-3}
                C_3 \leq O\of*{\frac{E(\vlambda)}{\epsilon}}.
            \end{equation}
    \end{enumerate}
    The probability of a successful swap test is always at least $\tfrac{1}{2}$. In stage 1 and stage 2, this is close to the truth, so we bound their sample complexities under the assumption that we use an expected factor of $4$ copies per level . I.e., $C_1 \leq 4^{l_1}$ and $C_2 \leq 4^{l_2}$. 

    However, depending on the initial and final purity of the state, the algorithm may not visit all three stages. To eliminate some of these cases, let us assume that $\epsilon \leq \tfrac{1}{3}$ so that the algorithm always ends in the third stage. If a lower fidelity is acceptable then our bounds will be loose but correct. 

    Now observe that we have the bound 
    \begin{equation}
        \label{eq:e-upper-substitute}
        E(\vlambda) = \sum_{j=2}^{d} \frac{\lambda_j}{\lambda_1 - \lambda_j} \leq \sum_{j=2}^{d} \frac{\lambda_j}{\Delta} = \frac{1-\lambda_1}{\Delta}.
    \end{equation}
    If the algorithm only spends time in stage 3, then we get a nice bound by using \cref{eq:e-upper-substitute} directly in \cref{eq:complexity-stage-3}: 
    \begin{equation}
        \label{eq:c3-stage-3}
        C_3 = O\of*{ \frac{1-\lambda_1}{\Delta \epsilon} } = O\of*{ \frac{1-\lambda_1}{\epsilon}}.
    \end{equation}
    Note that we can absorb $\Delta$ into the big-$\Theta$ here because in stage 3, $\tfrac{1}{2} \geq E(\vlambda) \geq \tfrac{1}{\lambda_1} - 1$ implies $\lambda_1 \geq \tfrac{2}{3}$. This means $\lambda_2$ is at most $\tfrac{1}{3}$ and thus $\Delta \geq \tfrac{1}{3}$ or $\tfrac{1}{\Delta} \leq 3$. 

    Now suppose the algorithm also spends some time in the second stage too. We cannot reuse \cref{eq:c3-stage-3}  because whatever $E(\vlambda)$ may be initially, stage 3 only begins when it has reduced to $\leq \tfrac{1}{2}$. With $E \leq \tfrac{1}{2}$ we get the following bound on stage 3. 
    \begin{equation}
    \label{eq:c23-stage-3}
        C_3 = O\of*{\frac{1}{\epsilon}}.
    \end{equation}
    For stage 2, we apply \cref{eq:e-upper-substitute} in \cref{eq:steps-stage-2}:
    \begin{equation}
        \label{eq:c23-stage-2}
        l_2 \leq 2 \ln 2 \frac{1 - \lambda_1}{\Delta} + (\frac{5}{2} + \frac{1}{\ln 2}) \ln \frac{1-\lambda_1}{\Delta} + O(1).
    \end{equation}
    
    Finally, suppose the algorithm passes through all three stages. In the first stage, \cref{eq:e-upper-substitute} and \cref{eq:steps-stage-1} give
    \begin{equation}
    \label{eq:c123-stage-1}
        l_1 \leq \of*{\frac{1}{\lambda_1} + \frac{3}{2}} \nlog\of*{\frac{(1-\lambda_1)\lambda_1}{\Delta}} + O(1).
    \end{equation}
    Going into phase 2, we may assume $E \leq \tfrac{1}{\lambda_1}$, which we combine with \cref{eq:steps-stage-2} to get
    \begin{equation}
        \label{eq:c123-stage-2}
        l_2 \leq \frac{2\nlog 2}{\lambda_1} + \of*{\frac{1}{\nlog 2} + \frac{5}{2}} \nlog \frac{1}{\lambda_1} + O(1).  
    \end{equation}
    We can then combine \cref{eq:c123-stage-1} with \cref{eq:c123-stage-2} and simplify. 
    \begin{align}
        l_1 + l_2 &\leq \of*{\frac{1}{\lambda_1} + \frac{3}{2}} \nlog\of*{\frac{(1-\lambda_1)\lambda_1}{\Delta}} + \frac{2\nlog 2}{\lambda_1} + \of*{\frac{1}{\nlog 2} + \frac{5}{2}} \nlog \frac{1}{\lambda_1} + O(1) \\
        &= \of*{\frac{1}{\lambda_1} - 1} \nlog\of*{\frac{(1-\lambda_1)\lambda_1}{\Delta}} + \frac{5}{2}  \nlog\of*{\frac{1-\lambda_1}{\Delta}} + \frac{2\nlog 2}{\lambda_1} + \frac{1}{\nlog 2} \nlog \frac{1}{\lambda_1} + O(1) \\
        &\leq \of*{\frac{1}{\lambda_1} - 1} \left[ \nlog\of*{\frac{(1-\lambda_1)\lambda_1}{\Delta}} + 2 \ln 2 \right] + \frac{5}{2}  \nlog\of*{\frac{1-\lambda_1}{\Delta}} + \log_2 \frac{1}{\lambda_1} + O(1). \label{eq:c123-stage-1-2}
    \end{align}
    Note that the complexity for stage 3 is the same as \cref{eq:c23-stage-3}.
    
    The final complexities arise from the following combinations of equations:
    \begin{align}
        & \,\begin{cases}
            \eqref{eq:c3-stage-3} & \text{if $\lambda_1 \geq \tfrac{2}{3}$,} \\
            \eqref{eq:c23-stage-3} \cdot 4^{\eqref{eq:c23-stage-2}} & \text{if $\lambda_1 \leq \tfrac{2}{3}$ and $\frac{1-\lambda_1}{\Delta} \leq \tfrac{1}{\lambda_1}$} \\
            \eqref{eq:c23-stage-3} \cdot 4^{\eqref{eq:c123-stage-1-2}} & \text{if $\frac{1-\lambda_1}{\Delta} \geq \tfrac{1}{\lambda_1}$} 
        \end{cases} \\
        =&
        \,\begin{cases}
            O \of*{ \frac{1-\lambda_1}{\epsilon}} & \text{if $\lambda_1 \geq \tfrac{2}{3}$,} \\
            O \of*{ \frac{1}{\epsilon}} \cdot 4^{2 \ln 2\tfrac{1 - \lambda_1}{\Delta} + (\tfrac{5}{2} + \tfrac{1}{\ln 2}) \ln \frac{1-\lambda_1}{\Delta}} & \text{if $\lambda_1 \leq \tfrac{2}{3}$ and $\frac{1-\lambda_1}{\Delta} \leq \tfrac{1}{\lambda_1}$} \\
            O \of*{ \frac{1}{\epsilon}}\cdot 4^{ \of*{\frac{1}{\lambda_1} - 1} \left[ \nlog\of*{\frac{(1-\lambda_1)\lambda_1}{\Delta}} + 2 \ln 2 \right] + \frac{5}{2}  \nlog\of*{\frac{1-\lambda_1}{\Delta}} + \log_2 \frac{1}{\lambda_1}} & \text{if $\frac{1-\lambda_1}{\Delta} \geq \tfrac{1}{\lambda_1}$} 
        \end{cases}
    \end{align}
\end{proof}

\section{Sample Complexity Lower bound}
\label{sec:lowbdd}

In this section, we derive three lower bounds (collected into two theorems) for the sample complexity to purify general noisy states.  
In each case, very few parameters about the noisy states are known, and the lower bound holds for the worse case spectra satisfying the constraints.    
Our derivations follow an approach inspired by, and similar to, that described in Section 5 of \cite{childs2025streaming}.  In each case, the general purification protocol is applied to solve a purification problem with known optimal sample complexity.  

The first two bounds are built upon an optimality result from 
\cite{CEM99}.  It is shown that, to perform the purification for a single qubit,
\begin{equation}
    \label{eq:cem99}
    \left( (1-\delta)\ket{\psi}\bra{\psi}+\delta\frac{I}{2} \right)^{\!\!\otimes N}\to
    ~\sigma
\end{equation}
such that $F(\sigma,\ket{\psi}\bra{\psi})\geq 1-\epsilon$,
the sample complexity is lower bounded by
\begin{equation}
        N\geq \frac{\delta}{2(1-\delta)^2\epsilon}-o_\delta \left(\frac{1}{\epsilon}\right).
        \label{eq:samcomcem}
\end{equation}
Here $o_\delta(\frac{1}{\epsilon})$ denotes a term that grows asymptotically slower than $\frac{1}{\epsilon}$ for any fixed $\delta$. 

From the above, we obtain the following sample complexity lower bounds for general purification.

\begin{theorem} ~~~\\[-3ex]
    \begin{enumerate}
        \item To purify a general mixed state $\rho$ with dimension $d$ and $\lambda_1 > \tfrac{1}{2}$, so that the fidelity to its principal eigenvector becomes at least $1-\epsilon$, the sample complexity in the worst case is at least
    \begin{equation}
    N\geq\frac{1-\lambda_1}{(2 \lambda_1-1)^2 \epsilon}  -  o_{\lambda_1}\left(\frac{1}{\epsilon}\right).
    \end{equation}
    \item To purify a general mixed state $\rho$ with dimension $d$ and gap $\Delta$, so that the fidelity to its principal eigenvector becomes at least $1-\epsilon$, the sample complexity in the worst case is at least
    \begin{equation}
    N\geq\frac{1-\Delta}{2 \Delta ^2 \epsilon}-o_\Delta \left(\frac{1}{\epsilon}\right).
    \end{equation}
    \end{enumerate}
\end{theorem}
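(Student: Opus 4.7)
The plan is a direct reduction to the CEM99 sample-complexity bound \eqref{eq:samcomcem}, in the spirit of Section~5 of \cite{childs2025streaming}. For each of the two statements I would exhibit a single worst-case $d$-dimensional state whose spectrum meets the stated hypothesis (on $\lambda_1$, respectively on $\Delta$) and which is unitarily equivalent to a qubit depolarized state embedded into $\C^d$. Any algorithm that purifies all general states meeting the hypothesis must in particular succeed on this embedded instance, and via the embedding isometry it can then be re-interpreted as a qubit-purification algorithm with the same expected sample complexity, so \eqref{eq:samcomcem} immediately applies.

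Concretely, fix orthogonal unit vectors $\ket{\psi},\ket{\phi}\in\C^d$ and, for a parameter $\delta\in(0,1)$ to be chosen, consider
\begin{equation*}
\tilde{\rho} \;:=\; \left(1-\tfrac{\delta}{2}\right)\ket{\psi}\bra{\psi} \;+\; \tfrac{\delta}{2}\,\ket{\phi}\bra{\phi},
\end{equation*}
which is a valid $d$-dimensional density matrix of the form \eqref{eq:input state 2} with non-degenerate principal eigenvalue $\lambda_1=1-\delta/2$, second eigenvalue $\lambda_2=\delta/2$, and gap $\Delta=1-\delta$. The state $\tilde{\rho}$ is unitarily equivalent (via any fixed isometry $\C^2\to\spn\{\ket{\psi},\ket{\phi}\}$) to the qubit depolarized state $(1-\delta)\ket{\psi_0}\bra{\psi_0}+(\delta/2)I_2$. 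For Part~1, given $\lambda_1>1/2$ I would take $\delta=2(1-\lambda_1)\in(0,1)$, so that $1-\delta=2\lambda_1-1$; substituting into \eqref{eq:samcomcem} yields $N\geq \frac{2(1-\lambda_1)}{2(2\lambda_1-1)^2\epsilon}-o_{\lambda_1}(1/\epsilon)$, which is exactly the stated bound. For Part~2, given $\Delta$ I would take $\delta=1-\Delta$, so that direct substitution in \eqref{eq:samcomcem} produces $N\geq \frac{1-\Delta}{2\Delta^2\epsilon}-o_\Delta(1/\epsilon)$.

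The step that requires the most care, and the one I would verify explicitly, is the reduction: I need to show that any general-purification protocol that, on $N$ copies of $\tilde{\rho}$, outputs a $d$-dimensional $\tilde{\sigma}$ with $\langle\psi|\tilde{\sigma}|\psi\rangle\geq 1-\epsilon$ yields a legitimate qubit-purification protocol for the CEM99 setup using the same expected $N$. Here I would prepend the fixed isometry $\C^2\to\spn\{\ket{\psi},\ket{\phi}\}$ to each incoming qubit copy to reproduce the input distribution $\tilde{\rho}^{\otimes N}$, and then observe that projecting $\tilde{\sigma}$ onto $\spn\{\ket{\psi},\ket{\phi}\}$ and renormalizing cannot decrease the fidelity with $\ket{\psi}$ (since $\ket{\psi}$ already lies in this subspace, so the projection only reduces the trace while leaving $\langle\psi|\tilde{\sigma}|\psi\rangle$ unchanged). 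With this reduction in hand, both parts follow from the algebraic substitutions indicated above.
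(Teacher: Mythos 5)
Your proposal is correct and follows essentially the same reduction as the paper's proof: embed the CEM99 qubit depolarized instance into a fixed two-dimensional subspace of $\C^d$ (giving $\lambda_1 = 1-\delta/2$, $\Delta = 1-\delta$), run the general protocol, map the $d$-dimensional output back to a qubit without decreasing the fidelity with the principal eigenvector, and substitute $\delta = 2(1-\lambda_1)$, respectively $\delta = 1-\Delta$, into \eqref{eq:samcomcem}. The one step to tighten is the back-reduction: ``project and renormalize'' is not a trace-preserving operation, so to legitimately invoke the qubit lower bound you should complete it to a channel as the paper does, e.g.\ $\sigma_d \mapsto P\sigma_d P + \Tr\bigl((I-P)\sigma_d\bigr)\tfrac{I}{2}$, which still satisfies $\bra{\psi}\sigma_2\ket{\psi} \geq \bra{\psi}\sigma_d\ket{\psi}$.
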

The lower bounds from the above theorem can be compared with the upper bounds derived for the recursive swap test purification protocol 
\cref{eq:Cupbound1,eq:Cupbound2}, or  \cref{eq:proof-2-complexity}.  
We can see that if $\lambda_1$ is bounded away from $1/2$ (for example, $\lambda_1\geq 2/3$), then our protocol is optimal up to an absolute constant factor in the sense that it works for general mixed state with dimension $d$ and principal eigenvalue $\lambda_1$ and matches the lower bound in the worst case.
Similarly, if $\Delta$ is bounded away from 0 (and the distance is independent of $d$, for example, $\Delta\geq 1/3$), then our protocol is optimal up to an absolute constant factor in the sense that it works for general mixed state with dimension $d$ and gap $\Delta$ and matches the lower bound in the worst case.

\begin{proof}
    Given a general state purification protocol, we can design a qubit purification protocol as follows:
    \begin{equation}
        \rho_2^{\otimes N} \xrightarrow{\text{embedding}} \rho_d^{\otimes N} \xrightarrow{\text{purification}} \sigma_d
        \xrightarrow{\text{reduction}} \sigma_2,
    \end{equation}
    introducing quantum channels which embed a qubit into a $d$-dimensional qudit, and extract a qubit from a qudit. 
    
    More precisely, we suppose that the qubit is the subspace of the qudit spanned by the first two basis vectors, $\ket{1}$ and $\ket{2}$. Then an arbitrary single-qubit mixed state $\rho_2 = (1-\delta)\ket{\psi}\bra{\psi}+\delta\frac{I}{2}$ can be embedded into the qudit by inclusion:  
    \begin{equation}
        \rho_2 \mapsto \rho_d=\rho_2+0\sum_{i=3}^{d}\ket{i}\bra{i}.
    \end{equation}
    Here,
    \begin{equation}\label{eq:embedDelta}
        \lambda_1(\rho_d)=1-\frac{\delta}{2}, ~~\Delta(\rho_d)=1-\delta. 
    \end{equation}
    Then we perform the purification protocol $\mathcal{P}$ for $\rho_d^{\otimes N}$:  
    \begin{equation}
        \rho_d^{\otimes N} \mapsto \sigma_d=\mathcal{P}(\rho_d^{\otimes N}).
    \end{equation}
    Then we reduce $\sigma_d$ back to a qubit state as follows:
    \begin{equation}
        \sigma_d \mapsto \sigma_2=
        P\sigma_d P+\Tr((1-P)\sigma_d)\frac{I}{2},
    \end{equation}
    where $P$ is the projector onto the subspace spanned by $\ket{1}$ and $\ket{2}$. That is, we make the two-outcome measurement with projectors $P$ and $I - P$, and replace the state with a maximally mixed qubit for the $I - P$ outcome (and keep the $P$ outcome state). 

    Note that 
    \begin{equation}
    \bra{\psi}\sigma_2\ket{\psi}
    =\bra{\psi}\sigma_d\ket{\psi}+\Tr((1-P)\sigma_d) \bra{\psi}\tfrac{I}{2}\ket{\psi}
    \geq \bra{\psi}\sigma_d\ket{\psi}.
    \end{equation}
    Therefore, to purify $\rho_2$ so that $\bra{\psi}\sigma_2\ket{\psi}\geq 1-\epsilon$, it suffices to purify $\rho_d$ so that $\bra{\psi}\sigma_d\ket{\psi}\geq 1-\epsilon$.
    Therefore, to purify $\rho_d$ to $\epsilon$ infidelity, we also need
    the sample complexity according to \cref{eq:samcomcem}:
    \begin{equation}
        N\geq\frac{\delta}{2(1-\delta)^2\epsilon}-o_\delta
        \left(\frac{1}{\epsilon}\right).
    \end{equation}
    Substituting $\e$ with $\e = 2(1-\lambda_1(\rho_d))$ and with $\e = 1-\Delta(\rho_d)$ (immediate from \cref{eq:embedDelta}) gives statements 1 and 2 of the theorem respectively.      
\end{proof}

Note that \cref{eq:embedDelta} requires $\lambda_1>1/2$ for allowing a solution with $\delta<1$.  Furthermore, while our proof is similar to that employed in Section 5 of \cite{childs2025streaming}, applying it here to the more general input state significantly simplifies the proof, and strengthens the bound to be dimension independent.   
In the current derivation, the state after the embedding need not be a depolarized state, which enables us to tune the embedding parameter more favorably.

\vspace*{2ex}

Our third lower bound is built on the more recent optimality result from 
\cite{LFIC24}, wherein \cref{eq:samcomcem} is generalized to qudits.
It is shown that to perform
\begin{equation}
    \of*{(1-\delta)\ket{\psi}\bra{\psi}+\delta\frac{I}{d}}^{\!\!\otimes N}\to 
    ~\sigma
\end{equation}
such that $F(\sigma,\ket{\psi}\bra{\psi})\geq 1-\epsilon$,
the sample complexity is \emph{asymptotically} lower bounded by
\begin{equation}\label{eq:NlowerboundquDit}
        N\geq \left(1-\frac{1}{d}\right)\frac{\delta}{(1-\delta)^2\epsilon}-o_{\delta,d}\left(\frac{1}{\epsilon}\right).
\end{equation}
Here $o_{\delta,d}(\frac{1}{\epsilon})$ denotes a term that grows asymptotically slower than $\frac{1}{\epsilon}$ for any fixed $\delta$ and $d$.

With this in mind, we can prove the following result.
\begin{theorem}
    To purify a general mixed state with dimension $d$, principal and second principal eigenvalues $\lambda_1$ and $\lambda_2$, so that the fidelity to its principal eigenvector becomes at least $1-\epsilon$, the sample complexity in the worst case is at least
    \begin{equation}
        N > \frac{1-\lambda_1 }{  4\Delta^2\epsilon    }-o_{\lambda_1,\lambda_2}\left(\frac{1}{\epsilon}\right).
    \end{equation}
\end{theorem}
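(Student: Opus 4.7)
The plan is to mirror the proofs of statements~1 and~2 of the previous theorem, now invoking the qudit optimality result \cref{eq:NlowerboundquDit} in place of the qubit version \cref{eq:samcomcem}. Specifically, I would build the reduction pipeline
\[
\rho_{d'}(\delta)^{\otimes N}\xrightarrow{\text{embed}}\rho_d^{\otimes N}\xrightarrow{\text{general purification}}\sigma_d\xrightarrow{\text{project}}\sigma_{d'},
\]
where a $d'$-dimensional depolarized input is zero-padded to dimension $d$, the general purification protocol is run, and the output is projected back onto the $d'$-dimensional subspace (replacing with $I_{d'}/d'$ on projection failure, exactly as in the earlier proofs). This pipeline gives a protocol for purifying $\rho_{d'}(\delta)$, so the qudit lower bound \cref{eq:NlowerboundquDit} controls its sample complexity.

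For the parameters, I would take $d' = \lfloor 1+(1-\lambda_1)/\lambda_2\rfloor$ (a positive integer at least $2$, since $\lambda_1+\lambda_2\le 1$) and $\delta = d'\lambda_2$. The embedded state $\rho_d$ then has second eigenvalue exactly $\lambda_2$ and principal eigenvalue $\lambda_1' = 1-(d'-1)\lambda_2\in[\lambda_1,\lambda_1+\lambda_2]$, coinciding with $\lambda_1$ precisely when $(1-\lambda_1)/\lambda_2$ is an integer. In the latter ``clean'' case, $\delta = 1-\Delta$ and the reduction gives exactly $N\ge (1-\lambda_1)/(\Delta^2\epsilon) - o(1/\epsilon)$, which is \emph{four times} the stated bound; the explicit factor $1/4$ in the theorem is the slack needed to absorb the rounding loss in $d'$ in the general case, leveraging $(1-1/d')\ge 1/2$ and $\delta\in[1-\lambda_1,1-\Delta]$ to relate the qudit bound to $(1-\lambda_1)/\Delta^2$ up to a constant.

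The main technical obstacle is the small mismatch between the embedded state and the target family: $\rho_d$ has principal eigenvalue $\lambda_1'\ge \lambda_1$ rather than exactly $\lambda_1$, so $\rho_d$ does not strictly lie in the family of $d$-dimensional states with top two eigenvalues $(\lambda_1,\lambda_2)$. To translate the qudit lower bound into one for the stated family, I would exhibit a witness state $\rho^\ast$ with spectrum $(\lambda_1,\lambda_2,\ldots,\lambda_2,\lambda_1'-\lambda_1,0,\ldots,0)$ using $d'-1$ copies of $\lambda_2$, which is valid since $\lambda_1'-\lambda_1\le\lambda_2$, and then insert a universal (i.e., $|\psi\rangle$-independent) noise-redistribution channel at the start of the pipeline that maps $\rho_d$ to $\rho^\ast$. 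Constructing such a channel---or alternatively arguing via a symmetrization/continuity argument on the leading $1/\epsilon$ coefficient of the sample complexity---is the most delicate step of the proof; the ample $1/4$ factor precisely ensures that the associated constant-factor overhead does not spoil the form of the bound.
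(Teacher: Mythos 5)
Your reduction pipeline and the choice $d'=\floor*{(1-\lambda_1)/\lambda_2}+1$ match the paper's, but your embedding does not, and the difference is not cosmetic. The paper does not zero-pad with $\delta=d'\lambda_2$; it embeds $\rho_{d'}\mapsto \mu\rho_{d'}+(1-\mu)\,I_{d-d'}/(d-d')$ (\cref{eq:embed2}) and solves for \emph{both} parameters, $\delta=d'\lambda_2/(\lambda_1+(d'-1)\lambda_2)$ and $\mu=\lambda_1+(d'-1)\lambda_2$ (\cref{eq:solvedelta2}), so that the embedded state has top two eigenvalues exactly $(\lambda_1,\lambda_2)$. Your ``main technical obstacle'' therefore never arises in the paper, and no noise-redistribution channel or continuity argument is needed. (Incidentally, the channel you postulate does exist when $d'<d$: mix with the fixed state $\lambda_2 P_{d'}+(\lambda_1'-\lambda_2)\ketbra{e}{e}$ with weight $(\lambda_1'-\lambda_1)/(\lambda_1'-\lambda_2)$, where $\lambda_1'=1-(d'-1)\lambda_2$. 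But you leave it unconstructed, and, more importantly, it cannot rescue the count below.)

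The fatal problem is quantitative: even granting your fix, the lower bound \cref{eq:NlowerboundquDit} is invoked for purifying $\rho_{d'}(\delta)$ at \emph{your} $\delta=d'\lambda_2$, for which $1-\delta=\lambda_1'-\lambda_2=\Delta+f\lambda_2$ with $f$ the fractional part of $(1-\lambda_1)/\lambda_2$. The shortfall of the resulting bound relative to $\frac{1-\lambda_1}{4\Delta^2\epsilon}$ is a factor of order $\bigl((\Delta+f\lambda_2)/\Delta\bigr)^2$, which is unbounded --- not a rounding loss absorbable by the explicit $1/4$. Concretely, take $\lambda_1=0.41$, $\lambda_2=0.40$ (realizable in $d=3$ with spectrum $(0.41,0.40,0.19)$): then $d'=2$ and your $\delta=0.8$, so \cref{eq:NlowerboundquDit} gives only $N\gtrsim 10/\epsilon$, whereas the theorem asserts $N\gtrsim 1475/\epsilon$ (the paper's tuned $\delta=0.8/0.81$ gives roughly $3240/\epsilon$). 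The missing idea is precisely the re-tuning of $\delta$ together with $\mu$, which makes $1-\delta=\Delta/\mu$, whence $\delta/(1-\delta)^2=d'\lambda_2\,\mu/\Delta^2>(1-\lambda_1)(1-\lambda_2)/\Delta^2$ and $(1-1/d')(1-\lambda_2)\geq 1/4$ delivers the stated constant. As written, your argument does not establish the theorem.
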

Compare the above theorem with \cref{eq:Nupperboundthm1,eq:proof-2-complexity}, we see that if $\Delta$ is bounded away from 0 (and the distance is independent of $d$), then our protocol is optimal up to an absolute constant factor in the sense that it works for general mixed state with given dimension $d$, principal and second principal eigenvalues $\lambda_1$ and $\lambda_2$ and matches the lower bound in the worst case.

\begin{proof}
    We use a similar proof strategy as before, except that now we embed $d'$-dimensional qudits, with $d'$ to be determined later. More precisely, given $\rho_{d'}=(1-\delta)\ket{\psi}\bra{\psi}+\delta \frac{I}{d'}$, we embed it into a $d$-dimensional space as follows:
    \begin{equation}\label{eq:embed2}
        \rho_{d'} \mapsto \rho_d=\mu\rho_{d'}+(1-\mu)\frac{I_{d-d'}}{d-d'},
    \end{equation}
    where $0 \leq \mu \leq 1$ and $I_{d-d'}/(d-d')$ is the maximally mixed state in the space orthogonal to $\rho_{d'}$.
    For this state, we have
    \begin{equation}
        \lambda_1(\rho_d)=\mu(1-\frac{d'-1}{d'}\delta), ~~\lambda_2(\rho_d)=\frac{\mu\delta}{d'},
    \end{equation}
    assuming 
    \begin{equation}\label{eq:condition1}
        \mu\delta/d' \geq (1-\mu)/(d-d').
    \end{equation}
    Solving $\mu$ and $\delta$, we get
    \begin{equation}\label{eq:solvedelta2}
        \delta=\frac{d'\lambda_2}{\lambda_1+(d'-1)\lambda_2},~~
        \mu=\lambda_1+(d'-1)\lambda_2.
    \end{equation}
    The requirement \cref{eq:condition1} is then automatically satisfied, since it is equivalent to $\lambda_1+(d-1)\lambda_2\geq 1$. 
    Also, we need $0\leq \delta \leq 1$, and $0\leq \mu \leq 1$.
    The former is automatically satisfied since $\lambda_1>\lambda_2$, and the latter is achieved by choosing 
    \begin{equation}\label{eq:defdprime}
        d'=\floor*{\frac{1-\lambda_1}{\lambda_2}}+1.
    \end{equation}
    The above choice for $d'$ also implies $d'\leq d$, hence this is a valid choice.
    To complete the analysis, we substitute 
    \cref{eq:solvedelta2,eq:defdprime} into \cref{eq:NlowerboundquDit} and note that 
    \begin{equation}
        \left(1-\frac{1}{d'}\right)\frac{\delta}{(1-\delta)^2\epsilon}=
        \of*{1-\frac{1}{d'}}\frac{
        d'\lambda_2(\lambda_1+(d'-1)\lambda_2)
        }{\Delta^2\epsilon}
        >\left(1-\frac{1}{d'} \right) \frac{     (1-\lambda_1)(1-\lambda_2)    }{ \Delta^2\epsilon }
        \geq \frac{ 1-\lambda_1 }{ 4\Delta^2\epsilon }.
    \end{equation}
\end{proof}

\section*{Acknowledgements}
We thank Andrew Childs, Honghao Fu, and Maris Ozols for permission to adapt figures, algorithm specifications, and some text and equations from \cite{childs2025streaming}, and Zhaoyi Li and Honghao Fu for sharing their recent results from \cite{LFIC24}.

DL received support from NSERC discovery grant and NSERC Alliance grant under the project QUORUM.
ZL is supported by Applied Quantum Computing Challenge Program from National Research Council of Canada.
DL and ZL, via the Perimeter Institute, are supported in part by the Government of Canada through the
Department of Innovation, Science and Economic Development and by the
Province of Ontario through the Ministry of Colleges and Universities.

\bibliographystyle{alphaurl}
\bibliography{purification}

\end{document}